\newtheorem{theorem}{Theorem}
\newtheorem{lemma}{Lemma}
\newtheorem{Scale_property}{Scale Property}
\def\ScaleIfNeeded{%
\ifdim\Gin@nat@width>\linewidth \linewidth \else \Gin@nat@width
\fi } \makeatother
\begin{document}

 \title{{ Spectral and Energy Efficiency of Uplink D2D Underlaid Massive MIMO Cellular Networks}}
 \author{ Anqi He, Lifeng Wang,~\IEEEmembership{Member,~IEEE,}  Yue Chen,~\IEEEmembership{Senior Member,~IEEE,}\\ Kai-Kit Wong,~\IEEEmembership{Fellow,~IEEE}, and Maged Elkashlan,~\IEEEmembership{Member,~IEEE}
\thanks{A. He, Y. Chen, and M. Elkashlan are with School of Electronic Engineering and Computer
Science, Queen Mary University of London, London, UK (Email: $\rm\{a.he,yue.chen,maged.elkashlan\}@qmul.ac.uk$).}
\thanks{L. Wang and K.-K. Wong are with the Department of Electronic and Electrical Engineering, University College London, London, UK (Email: $\rm\{lifeng.wang, kai$-$\rm kit.wong\}@ucl.ac.uk$).}
}

\maketitle

\begin{abstract}
One of key 5G scenarios is that device-to-device (D2D) and massive multiple-input multiple-output (MIMO) will be co-existed. However, interference in the uplink D2D underlaid massive MIMO cellular networks needs to be coordinated, due to the vast cellular and D2D transmissions. To this end, this paper introduces a spatially dynamic power control solution for mitigating the cellular-to-D2D and D2D-to-cellular interference. In particular, the proposed D2D power control policy is rather flexible including the special cases of no D2D links or using maximum transmit power. Under the considered power control, an analytical approach is developed to evaluate the spectral efficiency (SE) and energy efficiency (EE) in such networks. Thus, the exact expressions of SE for a cellular user or D2D transmitter are derived, which quantify the impacts of key system parameters such as massive MIMO antennas and D2D density. Moreover, the D2D scale properties are obtained, which provide the sufficient conditions for achieving the anticipated SE. Numerical results corroborate our analysis and show that the proposed power control solution can efficiently mitigate interference between the cellular and D2D tier. The results demonstrate that there exists the optimal D2D density for maximizing the area SE of D2D tier. In addition, the achievable EE of a cellular user can be comparable to that of a D2D user.
\end{abstract}

\begin{IEEEkeywords}
Massive MIMO, D2D, uplink power control, spectral efficiency, energy efficiency.
\end{IEEEkeywords}

\section{Introduction}
With the increasing demand for high-definition mobile multimedia and fast mobile internet services, fifth generation (5G) mobile networks are anticipated to support the deluge of data traffic~\cite{Jeffrey_5G}. According to 5G-PPP, one of the key performance indicators (KPIs) in 5G  mobile networks is that the energy consumption will be at least ten times lower than 2010~\cite{5GPPP_2015}, which means that energy efficiency (EE) will play an important role in the 5G design.  Among the emerging technologies~\cite{D_liu_survey}, massive multiple-input multiple-output (MIMO) and device-to-device (D2D) are viewed as two key enablers to achieve 5G targets.

Massive MIMO can drastically improve the spectral efficiency (SE) by using large number of antennas and accommodating dozens of users in the same radio channel~\cite{ngo2013energy}. {However, the circuit power consumption increases with the number of antennas, which may deteriorate the downlink EE of massive MIMO systems~\cite{Anqi_letter_2015}. The existing works such as~\cite{J_zhang_2016_Dai,Jiayi_JSAC_2017} have investigated the use of low-resolution/mixed analog-to-digital convertors (ADCs) in an attempt to reduce circuit power consumption.} D2D takes advantage of the proximity to support direct transmissions without the aid of base stations (BSs) or the core networks. As a result, D2D can improve both SE and EE, and decrease the delay~\cite{Asadi_A_2014}. However, the D2D distance plays a dominant role in D2D transmission, which significantly affects the D2D performance. When D2D users and cellular users share the same frequency bands in D2D underlaid massive MIMO cellular networks,  interference becomes a key issue to be addressed. In such networks, severe co-channel interference exists due to the following two key factors:
 \begin{itemize}
   \item In contrast to the traditional cellular networks, massive MIMO cellular networks enable much more cellular transmissions at the same time and frequency band. As such, the inter-cell interference and cellular-to-D2D interference will be much higher than ever before.
   \item  D2D users are expected to be dense for offloading the network traffic. As such, the D2D-to-cellular interference will significantly deteriorate the cellular transmissions.
 \end{itemize}
 Currently, interference mitigation in such networks remains an open problem.

 This paper focuses on uplink D2D underlaid massive MIMO cellular networks. In order to coordinate the inter-cell interference, cellular-to-D2D interference, and D2D-to-cellular interference, we consider two power control schemes for cellular users and D2D users, respectively. To date, there are few results available for presenting the uplink SE and EE with power control in such networks. Therefore, this paper reveals design insights into the interplay between massive MIMO and D2D in the uplink cellular setting.

\subsection{Related Works and Motivation}
The implementation of D2D in the cellular networks is a promising approach to offload cellular traffic and avoid congestion in the core network~\cite{D_Feng_2014}. In \cite{ElSawy_2014_TCOM}, D2D and cellular mode selection was considered for achieving better link quality. The work of \cite{Min_2011} assumed that D2D user has a protection zone such that the uplink cellular-to-D2D interference cannot be larger than a threshold, and showed that the capacity of a D2D link can be enhanced while the capacity loss of cellular users is negligible. In \cite{Yang_2016_TII}, cooperative transmissions in the D2D overlay/underlay cellular networks were studied, and it was verified that the D2D transmission capacity can be  enhanced with the assistance of relay.  In \cite{Ma_C_2016}, a contract-based cooperative spectrum sharing was developed to exploit the transmission opportunities for the D2D links and keep the maximum profit of the cellular links. Nevertheless, the aforementioned literature only considered D2D communications in the traditional cellular networks, and more research efforts are needed to comprehensively understand the D2D communications in the future cellular networks such as 5G with many disruptive technologies \cite{D_liu_survey}.

Power control has been widely studied in conventional D2D underlaid cellular networks for interference
management~\cite{5949138,K_S_Ali_2015,Huang_2016_TCOM,lee2015power,Zhong_Kit_2015,lin2016joint,Yang_huang_2016}. In \cite{5949138},  a dynamic power control mechanism was proposed
for controlling the D2D user's transmit power, so as to reduce the D2D-to-cellular interference. In \cite{K_S_Ali_2015}, the truncated channel inversion power control was adopted such that the data rate is constant during the transmissions, and  D2D and cellular users cannot transmit signals if their transmit power is larger than a predefined value. A centralized power control solution in D2D enabled two-tier cellular networks was proposed by \cite{Huang_2016_TCOM}.  In \cite{lee2015power},  power control algorithms were proposed for mitigating the cross-tier
interference between the D2D links and one single cellular link. In the work of \cite{lee2015power}, centralized power control problem was formulated
as a linear-fractional programming and the optimal solution was obtained by using standard convex programming tools. D2D power control in conventional uplink MIMO cellular networks was studied by authors of \cite{Zhong_Kit_2015}, where a distributed resource allocation algorithm  was proposed based on the game-theoretic model. In \cite{lin2016joint},
joint beamforming and power control was studied in a single cell consisting of one D2D pair and multiple cellular users, and the optimization
problem was formulated for minimizing the total transmit power. The work of \cite{Yang_huang_2016} also considered a D2D underlaid single cell network and  investigated the downlink power control for maximizing the sum rate of D2D pairs. However, these prior works only pay attention to power control problem in the conventional D2D underlaid cellular networks.  Moreover, the majority of the existing D2D power control designs such as \cite{Zhong_Kit_2015,lin2016joint,Yang_huang_2016} need the global channel state information (CSI), which is challenging in D2D underlaid  massive MIMO cellular networks, since the CSI  between the D2D users and massive MIMO enabled BSs cannot be easily obtained.

The opportunities and challenges of the co-existence of the massive MIMO and D2D have recently been investigated in the uplink~\cite{lin2015interplay} and downlink transmissions~\cite{shalmashi2015energy}. {In \cite{lin2015interplay}, D2D and massive MIMO aided cellular uplink SE were studied and the interplay between D2D and massive MIMO was exploited, which showed that there is a loss in cellular SE due to D2D underlay. To redeem the cellular performance loss, authors in  \cite{lin2015interplay} assumed that the number of canceled D2D interfering signals is scaled with the number of BS antennas. { In \cite{shalmashi2015energy}, downlink sum rate and EE were analyzed in a single massive MIMO cell}, where multiple D2D transmitters were randomly located. The work of \cite{shalmashi2015energy} utilized equal power allocation without considering interference management, and showed that the benefits of the coexistence of D2D and massive MIMO are limited by the density of D2D users.  Particularly when there are vast D2D links and each massive MIMO BS provides services for dozens of users, interference becomes a major issue and needs to be mitigated~\cite{lin2015interplay,shalmashi2015energy}.  Although the existing works \cite{lin2015interplay} and \cite{shalmashi2015energy} have respectively investigated the uplink and downlink features of the massive MIMO cellular networks with underlaid D2D, the interference management via power control in such networks has not been conducted yet. To date, no effort has been devoted
to analyze the effects of uplink power control on the SE and EE of the D2D underlaid  massive MIMO cellular networks.

\subsection{Contributions}
This paper focuses on the uplink D2D underlaid massive MIMO cellular networks, in which power control is adopted for interference coordination. The detailed contributions are summarized as follows:
\begin{itemize}
\item We consider a massive MIMO aided multi-cell network, where cellular users are associated with the nearest BS for uplink transmissions, and the D2D transmitters are randomly located. In such a network, we introduce a power control solution to mitigate the inter-cell, cellular-to-D2D and D2D-to-cellular interference. { Specifically, cellular users are recommended to utilize open-loop power control with maximum transmit power constraint, to mitigate the uplink inter-cell interference and cellular-to-D2D interference. Considering the fact that D2D transmissions are unpredictable, the rationale behind the proposed D2D power control policy is that the average received D2D signal power from an arbitrary D2D transmitter should be controlled at a certain level with maximum D2D transmit power constraint, to mitigate the D2D-to-cellular interference. Different from the existing designs such as \cite{Zhong_Kit_2015,lin2016joint,Yang_huang_2016,lin2015interplay}, the proposed D2D power control policy does not require the global CSI. In addition, the positions of D2D transmitters and BSs are modeled by independent Poisson point processes, which indicates that the transmit power of cellular user or D2D transmitter is spatially dynamic in this paper.}

\item We develop an analytical approach to quantify the impacts of massive MIMO and D2D. Based on the proposed power control polices, the exact expressions of SE for a cellular user or D2D transmitter are derived, which accounts for the features of massive MIMO and D2D. Since the severe interference resulted from dense D2D transmissions can drastically degrade the SE, we provide two D2D scale properties, which explicitly show that the D2D density should not be larger than a critical value for achieving the desired SE. The average power consumption under the proposed power control policies are derived, which helps us evaluate the EE in such networks. It is confirmed from the derived results that adding more massive MIMO antennas can enhance both SE and EE of a cellular user and has no effect on the D2D communication. Simultaneously serving more cellular users in each cell will deteriorate both  SE and EE of a cellular user and D2D transmitter.

\item Simulation results validate our analysis and demonstrate the effectiveness of the proposed power control solution. {Our results show that when the D2D communication distance moderately increases, the SE and EE of a cellular user is comparable to that of a D2D transmitter.}

\end{itemize}

The remainder of this paper is organized as follows. Section II presents the proposed system model and the power control mechanism. Section III evaluates the SE and EE of the cellular and D2D links. Numerical results are provided in Section IV and
conclusion is drawn in Section V.

\section{System Description}
As shown in Fig. 1,  we consider uplink transmission in a cellular network, where massive MIMO enabled macrocells are underlaid with D2D transceivers,  i.e., they share the same frequency bands. The locations of macrocell base stations (MBSs) are modeled following a homogeneous Poisson point process (HPPP) $\Phi_\mathrm{M}$ with
density $\lambda_\mathrm{M}$. The locations of D2D transmitters are modeled following an independent
HPPP $\Phi_\mathrm{D}$ with density $\lambda_\mathrm{D}$. Each MBS is equipped with $N$ antennas and receives data
streams from $S$ single-antenna cellular user equipments (CUEs) over the same time and frequency band,
while each D2D receiver equipped with one single antenna receives one data stream from a single-antenna D2D transmitter in each transmission.
The linear zero-forcing beamforming  (ZFBF) is employed to cancel the intra-cell interference at the MBS~\cite{Hosseini2014_Massive}. It is assumed that the
density of CUEs is much greater than that of MBSs so that there always will be multiple active CUEs in every macrocell. Each channel
undergoes independent and identically distributed (i.i.d.) quasi-static Rayleigh fading. {Each CUE is assumed to be connected with
its nearest MBS such that the Euclidean plane is divided into Poisson-Voronoi cells~\cite{Kaibin2014,Emil_2016_small_cell_massive_MIMO}.}
\begin{figure}\label{smd2d}
\centering
\includegraphics[width=3.2 in, height=2.7 in]{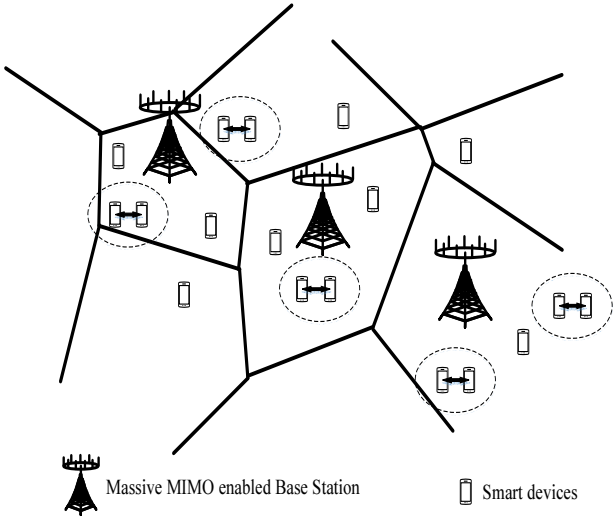}
\caption{An illustration of the D2D underlaid cellular networks equipped with massive MIMO MBSs.}
\end{figure}


\subsection{Power Control Policy}
In the macrocells, the open-loop uplink power control is applied such that far-away CUEs can obtain more path loss compensation}, and the transmit power for a CUE associated with the MBS is given by\footnote{Note that \cite{Novlan_2013_TWC} also studied the open-loop power control in a single-tier cellular networks without considering the maximum transmit power constraint.}
\begin{align}\label{power_control_MBS}
{P_{{\mathrm{C}}}}=\min\left\{P_\mathrm{max}^{\mathrm{C}},P_o \left( L\left(\left|X_{\mathrm{C},\mathrm{M}}\right|\right)\right)^{ -\eta} \right\},
\end{align}
where $P_\mathrm{max}^{\mathrm{C}}$ is the maximum transmit power, $P_o$ is the normalized power density, $L\left(\left|X_{\mathrm{C},\mathrm{M}}\right|\right)=\beta{ \left({{\left|X_{\mathrm{C},\mathrm{M}}\right|}}\right)^{-\alpha_\mathrm{M}}}$, $\alpha_\mathrm{M}$ is the path loss exponent, $\beta$ is the frequency dependent constant value, $\left|X_{\mathrm{C},\mathrm{M}}\right|$ is the distance between the CUE and its associated MBS and $\eta \in [0,1]$ is the path loss compensation factor, which controls the CUE transmit power. Here $\eta = 1$ represents
that the path loss between a CUE and its serving MBS is fully
compensated, and $\eta = 0$ represents that there is no path loss
compensation.  Note that the open-loop uplink power control does not require the instantaneous CSI.

To mitigate the D2D-to-cellular interference, we consider that the average received interference at the MBS from a D2D transmitter should not exceed a maximum value $I_\mathrm{th}$ under maximum D2D transmit power constraint, which is different from \cite{K_S_Ali_2015} where D2D transmitter stops transmissions if its transmit power is larger than a predefined value to achieve a fixed data rate. Therefore, the D2D transmit power is given by
\begin{align} \label{Power_control_D2D}
{P_{{\mathrm{D}}}}=\min\left\{P_\mathrm{max}^\mathrm{D},\frac{I_\mathrm{th}}{L\left(\left|X_{\mathrm{D}, \mathrm{M}}\right|\right)}\right\},
\end{align}
where $P_\mathrm{max}^\mathrm{D}$ is the maximum D2D transmit power, $\left|X_{\mathrm{D}, \mathrm{M}}\right|$ is the distance between a D2D transmitter and its nearest MBS. If there is no power control on the D2D transmitters, the shorter $\left|X_{\mathrm{D}, \mathrm{M}}\right|$, the stronger interference power. Here, $I_\mathrm{th}=0$ represents that there is no allowable D2D transmission and the considered network reduces to the massive MIMO enabled multi-cell network, and $I_\mathrm{th}=\infty$ represents that there is no D2D power control. {Different from \cite{lin2015interplay} which assumed that MBSs can obtain the instantaneous CSI between the D2D transmitters and themselves, and have the ability of canceling sufficient number of D2D interfering signals, the proposed D2D power control policy does not need  the instantaneous CSI and possesses much lower complexity.}

\subsection{Channel Model}
We assume that a typical serving MBS is located at the origin $o$. The receive signal-to-interference-plus-noise ratio (SINR) of a typical serving MBS at a random distance $\left|X_{o,{\mathrm{M}}}\right|$ from its intended CUE is given by
\begin{align}\label{SINR_Macro}
\mathrm{SINR}_\mathrm{M}= \frac{{{{P_{o,\mathrm{C}}}}{h_{o,{\mathrm{M}}}} L\left( {\left|{X_{o,{\mathrm{M}}}}\right|} \right)}}{{I_\mathrm{M}+I_\mathrm{D}+ {\sigma^2}}},
\end{align}
where ${P_{o,\mathrm{C}}}$ is the transmit power of the typical CUE, $h_{o,{\mathrm{M}}}\sim \Gamma\left(N-S+1,1\right)$~\cite{Hosseini2014_Massive} is the small-scale fading channel power gain between the typical serving MBS and its intended CUE { \footnote{ {Since the cellular and D2D users may experience similar shadow fading conditions which are not independent, to be tractable, the effect of shadow fading is not examined in this paper.}}}, $\sigma^2$ is the noise power, $I_\mathrm{M}$ and $I_\mathrm{D}$ are the interference from inter-cell CUEs and D2D transmitters, which are found as
\begin{equation}
\left\{\begin{aligned}
I_\mathrm{M}&=\sum\nolimits_{i \in {\Phi_{\mathrm{u},\mathrm{M}}}\backslash \mathrm{B}\left(o\right)} {{P_{i,\mathrm{C}}}
{h_{i,\mathrm{M}}}L\left( {\left|X_{i,\mathrm{M}}\right|} \right)},\\
I_\mathrm{D}&=\sum\nolimits_{j \in {\Phi_\mathrm{D}}} {{P_{j,{\mathrm{D}}}}
{h_{j,\mathrm{M}}}L\left( \left|X_{j,\mathrm{M}}\right| \right)},
\end{aligned}\right.
\end{equation}
where  ${P_{i,\mathrm{C}}}$ is the transmit power of the interfering CUE $i \in {\Phi_{\mathrm{u},\mathrm{M}}}\backslash \mathrm{B}\left(o\right)$ ($\Phi_{\mathrm{u},\mathrm{M}} \backslash \mathrm{B}\left(o\right)$ is the point process corresponding to the interfering CUEs), ${P_{j,{\mathrm{D}}}}$ is the transmit power of the interfering D2D $j\in {\Phi_\mathrm{D}}$,  ${h_{i,\mathrm{M}}}\sim \rm{exp}(1)$ and $\left|X_{i,\mathrm{M}}\right|$ are the small-scale fading interfering channel power gain and
distance between the typical serving MBS and interfering CUE $i$, respectively, ${h_{j,\mathrm{M}}}\sim \rm{exp}(1)$ and $\left|X_{j,\mathrm{M}}\right|$ are the small-scale fading interfering channel power gain and distance between the typical serving MBS and interfering D2D transmitter $j$, respectively.

{We adopt the dipole model in which each D2D transmitter has a corresponding receiver at distance $d_o$~\cite{lee2015power,lin2015interplay,Baccelli2009}, since D2D network has  ad hoc functionality~\cite{Zhu_Kit_Robert_2017}.} The SINR of a typical D2D receiver   from its D2D transmitter is given by
\begin{align}\label{SINR_Small}
\mathrm{SINR}_\mathrm{D}= \frac{{{P_{o,\mathrm{D}}}{g_{o,\mathrm{D}}}L\left( d_o \right)}}{{J_\mathrm{M}+J_\mathrm{D}+ {\sigma^2}}},
\end{align}
where { $g_{o,\mathrm{D}}\sim \mathrm{exp}(1)$ and $L\left( d_o \right)=\beta{ \left(d_o\right)^{-\alpha_\mathrm{D}}}$ is the small-scale fading channel power gain and path loss between the typical  D2D receiver and its corresponding D2D transmitter, respectively, $\alpha_\mathrm{D}$ is the path loss exponent,} $J_\mathrm{M}$ and $J_\mathrm{D}$ are the interference from the CUEs and interfering D2D transmitters, respectively, given by
\begin{equation}
\left\{\begin{aligned}
J_\mathrm{M}&=\sum\nolimits_{i \in {\Phi_{\mathrm{u},\mathrm{M}}}} {{P_{i,\mathrm{C}}}
{g_{i,\mathrm{D}}}L\left( {\left|X_{i,\mathrm{D}}\right|} \right)},\\
J_\mathrm{D}&=\sum\nolimits_{j \in {\Phi_\mathrm{D} \backslash o}} {{P_{{j},\mathrm{D}}}
{g_{j,\mathrm{D}}}L\left( {\left|X_{j,\mathrm{D}}\right|} \right)},
\end{aligned}\right.
\end{equation}
where  ${g_{i,\mathrm{D}}}\sim \rm{exp}(1)$ and $\left|X_{i,\mathrm{D}}\right|$ are the small-scale fading interfering channel power gain and
distance between the typical  D2D receiver and interfering CUE $i\in {\Phi_{\mathrm{u},\mathrm{M}}}$, respectively, ${g_{j,\mathrm{D}}}\sim \rm{exp}(1)$ and $\left|X_{j,\mathrm{D}}\right|$ are the small-scale fading interfering channel power gain and
distance between the typical D2D receiver and interfering D2D transmitter $j\in {\Phi_\mathrm{D}\backslash o}$, respectively.

\section{Spectral and Energy Efficiency}
By addressing the effects of power control, we examine the SE and EE for the cellular and D2D transmissions. We first need to derive the following probability density function (PDF) of the D2D transmit power based on \eqref{Power_control_D2D}.

\subsection{D2D Transmit Power Distribution}
\begin{lemma}\label{Lemma1}
The PDF of a typical D2D transmit power is given by
\setcounter{equation}{6}\begin{align}\label{PDF_D2D}
f_{{P_{{\mathrm{D}}}}} \left(x\right)=\left\{ \begin{array}{l}
\frac{{2\pi \lambda_\mathrm{M} }}{{{\alpha _{\rm{M}}}}}{\left( \frac{\beta }{I_{{\rm{th}}}} \right)^{2/{\alpha _{\rm{M}}}}}{x^{ 2/{\alpha _{\rm{M}}} - 1}}\bar{\Delta}\left(x\right),\;\,x < P_\mathrm{max}^\mathrm{D} \\
\delta\left( {x - P_\mathrm{max}^\mathrm{D} } \right) \bar{\Delta}\left(P_\mathrm{max}^\mathrm{D}\right),\;\,x \ge P_\mathrm{max}^\mathrm{D}
\end{array} \right.,
\end{align}
where $\bar{\Delta}\left(x\right)=\exp\left(-\pi \lambda_\mathrm{M} {\left( {\frac{\beta x}{I_{{\rm{th}}}}} \right)}^{2/{\alpha _{\rm{M}}}}\right)$ and $\delta\left(\cdot\right)$ is the Dirac delta function.
\end{lemma}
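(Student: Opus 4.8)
\emph{Proof proposal.} The plan is to obtain the law of $P_{\mathrm{D}}$ by tracking how the nearest-MBS distance of a typical D2D transmitter is transformed by \eqref{Power_control_D2D}. Since $\Phi_\mathrm{D}$ and $\Phi_\mathrm{M}$ are independent and $\Phi_\mathrm{M}$ is stationary, the location of the typical D2D transmitter is independent of $\Phi_\mathrm{M}$, so the distance $R=\left|X_{\mathrm{D},\mathrm{M}}\right|$ to its nearest MBS is just the contact distance of a planar HPPP of density $\lambda_\mathrm{M}$ (equivalently, via Slivnyak's theorem). Using the void probability $\mathbb{P}\!\left(\Phi_\mathrm{M}\big(\mathcal{B}(o,r)\big)=0\right)=e^{-\pi\lambda_\mathrm{M} r^2}$ gives $F_R(r)=1-e^{-\pi\lambda_\mathrm{M} r^2}$ and $f_R(r)=2\pi\lambda_\mathrm{M} r\,e^{-\pi\lambda_\mathrm{M} r^2}$ for $r\ge 0$.

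Next I would introduce the uncapped power $Y=I_\mathrm{th}/L(R)=(I_\mathrm{th}/\beta)R^{\alpha_\mathrm{M}}$, a strictly increasing function of $R$, so that $P_{\mathrm{D}}=\min\{P_\mathrm{max}^\mathrm{D},Y\}$. Inverting, $R=(\beta Y/I_\mathrm{th})^{1/\alpha_\mathrm{M}}$, hence
\[
\overline{F}_Y(y)=\mathbb{P}\!\left(R\ge(\beta y/I_\mathrm{th})^{1/\alpha_\mathrm{M}}\right)=\exp\!\left(-\pi\lambda_\mathrm{M}\,(\beta y/I_\mathrm{th})^{2/\alpha_\mathrm{M}}\right)=\bar{\Delta}(y),
\]
and differentiating $1-\overline{F}_Y$ yields the density $f_Y(y)=\frac{2\pi\lambda_\mathrm{M}}{\alpha_\mathrm{M}}(\beta/I_\mathrm{th})^{2/\alpha_\mathrm{M}}\,y^{2/\alpha_\mathrm{M}-1}\,\bar{\Delta}(y)$ on $(0,\infty)$.

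Finally I would push $Y$ through the cap. On the event $\{Y<P_\mathrm{max}^\mathrm{D}\}$ the map $y\mapsto\min\{P_\mathrm{max}^\mathrm{D},y\}$ is the identity, so $P_{\mathrm{D}}$ has an absolutely continuous part equal to $f_Y$ restricted to $(0,P_\mathrm{max}^\mathrm{D})$, which is the first branch of \eqref{PDF_D2D}. The cap concentrates all the remaining probability $\mathbb{P}(Y\ge P_\mathrm{max}^\mathrm{D})=\bar{\Delta}(P_\mathrm{max}^\mathrm{D})$ at the single point $P_\mathrm{max}^\mathrm{D}$, which I would record as the Dirac term $\delta(x-P_\mathrm{max}^\mathrm{D})\,\bar{\Delta}(P_\mathrm{max}^\mathrm{D})$, i.e.\ the second branch. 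As a sanity check, $\int_0^{P_\mathrm{max}^\mathrm{D}}f_Y(x)\,dx+\bar{\Delta}(P_\mathrm{max}^\mathrm{D})=F_Y(P_\mathrm{max}^\mathrm{D})+\overline{F}_Y(P_\mathrm{max}^\mathrm{D})=1$, so the mixed density is normalized.

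This argument is essentially a one-dimensional change of variables followed by a truncation, so I do not anticipate a genuine obstacle; the only points needing care are justifying that the nearest-MBS distance really follows the HPPP contact distribution (independence and stationarity, or Slivnyak's theorem) and correctly bookkeeping the atom that the power cap creates at $P_\mathrm{max}^\mathrm{D}$, which must be expressed through a Dirac delta rather than an ordinary density.
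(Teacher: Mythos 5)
Your proposal is correct and follows essentially the same route as the paper's Appendix A: both derive the distribution of the uncapped power from the nearest-MBS contact-distance law $f_{\left| {X_{{\rm{D}},{\rm{M}}}} \right|}(r)=2\pi\lambda_\mathrm{M} r e^{-\pi\lambda_\mathrm{M} r^2}$ and then account for the cap at $P_\mathrm{max}^\mathrm{D}$ as an atom, the paper doing the bookkeeping via a unit-step term in the CDF whose derivative yields the Dirac delta. No substantive difference.
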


\begin{proof}
The proof is provided in Appendix A.

\end{proof}
From \textbf{Lemma} \ref{Lemma1}, we see that the level of the D2D transmit power is dependent on the massive MIMO enabled MBS density and the interference threshold $I_\mathrm{th}$.

\subsection{Spectral Efficiency}
{With the assistance of \textbf{Lemma} \ref{Lemma1}, the SE for a typical CUE and D2D transmitter can be derived in the following \textbf{Theorem 1} and \textbf{Theorem 2}, respectively. Note that the results in \textbf{Theorem 1} and \textbf{Theorem 2} are general and include the special case of fixed transmit power.
\begin{theorem}
\label{cuerate}
The  SE under power control for a typical CUE is given by
\begin{align}\label{macrocell_capacity}
\overline{R}_\mathrm{C}=1/\ln(2) \int_0^\infty \frac{\Xi_1\left(t\right)}{t} \Xi_2\left(t\right) e^{-\sigma^2 t} dt,
\end{align}
where $\Xi_1\left(t\right)$ and $\Xi_2\left(t\right)$ is given by \eqref{Xi_1} and \eqref{Xi_2} at the top of next page, in which $r_o=\left( {\frac{{{P_{{\rm{max}}}^\mathrm{C}}}}{{{P_o}}}} \right)^{1/({\alpha_\mathrm{M}}\eta) }{\beta^{1/{\alpha_\mathrm{M}}}}$  and $\varpi_0=\pi \lambda_\mathrm{M}{\left( \frac{\beta P_\mathrm{max}^\mathrm{D}}{I_{{\rm{th}}}} \right)^{2/{\alpha _{\rm{M}}}}}$.

\begin{figure*}[!t]
\normalsize
\begin{align}\label{Xi_1}
&\Xi_1\left(t\right)=\int_{\rm{0}}^\infty \left( {1- e^{ - t{{{P_{o,\mathrm{C}}}}(N-S+1) \beta (\frac{x}{\pi \lambda_\mathrm{M}})^{-\frac{\alpha_\mathrm{M}}{2}}} }} \right) \exp\Big(-2\pi S \lambda_\mathrm{M} \int_x^\infty (1-\Upsilon_1)  rdr  -x\Big) dx
\end{align}
with
\begin{align*}
\Upsilon_1=\left(1+t  {P_\mathrm{max}^{\mathrm{C}}}
\beta r^{-\alpha_\mathrm{M}}\right)^{-1} e^{-\pi \lambda_\mathrm{M} r_o^2} +\int_0^{\pi \lambda_\mathrm{M} r_o^2} \left(e^{\nu}+e^{\nu} t P_o \beta^{1-\eta} (\frac{\nu}{\pi \lambda_\mathrm{M}})^{\frac{\eta \alpha_\mathrm{M}}{2}}
 r^{-\alpha_\mathrm{M}}\right)^{-1}  d\nu
\end{align*}
\hrulefill
\begin{align}\label{Xi_2}
&\Xi_2\left(t\right)=\exp\Big\{-\pi \lambda_\mathrm{D} \beta^{\frac{2}{\alpha_\mathrm{M}}} \Big[\frac{{(P_\mathrm{max}^\mathrm{D})^{\frac{2}{{\alpha _{\rm{M}}}}}}}{\varpi_0}(1-e^{-\varpi_0}-\varpi_0 e^{-\varpi_0})+ \left(P_\mathrm{max}^\mathrm{D}\right)^{\frac{2}{\alpha_\mathrm{M}}} {\bar{\Delta}\left(P_\mathrm{max}^\mathrm{D}\right)}\Big] \nonumber \\
& \Gamma(1+\frac{2}{\alpha_\mathrm{M}}) \Gamma(1-\frac{2}{\alpha_\mathrm{M}}) t^\frac{2}{\alpha_\mathrm{M}}  \Big\}
\end{align}
\hrulefill
\end{figure*}

\end{theorem}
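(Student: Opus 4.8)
The plan is to write the ergodic rate as $\overline{R}_\mathrm{C}=\tfrac{1}{\ln 2}\,\mathbb{E}\big[\ln\!\big(1+\mathrm{SINR}_\mathrm{M}\big)\big]$ with $\mathrm{SINR}_\mathrm{M}$ from \eqref{SINR_Macro}, and to collapse the logarithm into a single Laplace-transform integral. Writing $S:=P_{o,\mathrm{C}}\,h_{o,\mathrm{M}}\,L(|X_{o,\mathrm{M}}|)$ for the desired signal power and $D:=I_\mathrm{M}+I_\mathrm{D}+\sigma^2$, I would invoke the Frullani-type identity $\ln(1+S/D)=\int_0^\infty t^{-1}\big(e^{-tD}-e^{-t(S+D)}\big)\,dt$, valid for $S,D>0$, and then interchange the expectation with the $t$-integral by Tonelli. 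Since $e^{-tD}-e^{-t(S+D)}=e^{-tD}\big(1-e^{-tS}\big)$, this produces $\overline{R}_\mathrm{C}\ln 2=\int_0^\infty t^{-1}e^{-\sigma^2 t}\,\mathbb{E}\big[e^{-t(I_\mathrm{M}+I_\mathrm{D})}\big(1-e^{-tS}\big)\big]\,dt$, so the only remaining task is to evaluate one joint Laplace transform.

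Next I would condition on the serving distance $R:=|X_{o,\mathrm{M}}|$, which carries the nearest-MBS density $f_R(r)=2\pi\lambda_\mathrm{M}r\,e^{-\pi\lambda_\mathrm{M}r^2}$, and then substitute $x=\pi\lambda_\mathrm{M}r^2$ so that $f_R(r)\,dr\mapsto e^{-x}\,dx$ and $L(R)=\beta\,(x/(\pi\lambda_\mathrm{M}))^{-\alpha_\mathrm{M}/2}$; this is the origin of the outer $\int_0^\infty(\cdot)\,e^{-x}\,dx$ and of the $(x/(\pi\lambda_\mathrm{M}))^{-\alpha_\mathrm{M}/2}$ that appears inside \eqref{Xi_1}. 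Conditionally on $R$, the desired link, the inter-cell CUE field, the D2D field and the thermal noise are mutually independent, so the integrand factors into a desired-signal term, $\mathcal{L}_{I_\mathrm{M}\mid R}(t)$, $\mathcal{L}_{I_\mathrm{D}}(t)$ (which does not depend on $R$ and is pulled out as $\Xi_2(t)$), and $e^{-\sigma^2 t}$. For the desired term I would use the moment generating function of the scaled-Gamma gain $h_{o,\mathrm{M}}\sim\Gamma(N-S+1,1)$; in the massive-MIMO regime channel hardening makes $h_{o,\mathrm{M}}$ concentrate on its mean $N-S+1$, which yields the factor $1-e^{-t(N-S+1)P_{o,\mathrm{C}}L(R)}$ with $P_{o,\mathrm{C}}$ given by \eqref{power_control_MBS}, exactly as in \eqref{Xi_1}.

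The two interference Laplace transforms come from the probability generating functional (PGFL) of the corresponding HPPPs. For $\mathcal{L}_{I_\mathrm{D}}(t)$ over $\Phi_\mathrm{D}$ (intensity $\lambda_\mathrm{D}$), I would first average the $\mathrm{exp}(1)$ interfering fading to obtain $\mathbb{E}_{P_\mathrm{D}}\big[(1+tP_\mathrm{D}\beta r^{-\alpha_\mathrm{M}})^{-1}\big]$, exchange the radial and $P_\mathrm{D}$ integrals, evaluate the inner radial integral in closed form as $\tfrac12(t\beta P_\mathrm{D})^{2/\alpha_\mathrm{M}}\Gamma\!\big(1+\tfrac{2}{\alpha_\mathrm{M}}\big)\Gamma\!\big(1-\tfrac{2}{\alpha_\mathrm{M}}\big)$, and finally take $\mathbb{E}\big[P_\mathrm{D}^{2/\alpha_\mathrm{M}}\big]$ using the PDF of \textbf{Lemma}~\ref{Lemma1} (its absolutely continuous branch plus the Dirac atom at $P_\mathrm{max}^\mathrm{D}$), which is precisely the bracketed quantity in \eqref{Xi_2}. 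For $\mathcal{L}_{I_\mathrm{M}\mid R}(t)$ I would model the co-channel interfering CUEs as an HPPP of intensity $S\lambda_\mathrm{M}$ outside the disc $B(o,R)$ --- the standard uplink exclusion-region approximation --- and compute the per-interferer factor $\Upsilon_1$ by averaging $e^{-tP_{i,\mathrm{C}}h_{i,\mathrm{M}}\beta r^{-\alpha_\mathrm{M}}}$ first over the $\mathrm{exp}(1)$ fading and then over the interfering CUE's own transmit power $P_{i,\mathrm{C}}=\min\{P_\mathrm{max}^\mathrm{C},P_o(L(\cdot))^{-\eta}\}$, splitting that CUE's serving-distance law at the power-control cutoff $r_o=(P_\mathrm{max}^\mathrm{C}/P_o)^{1/(\alpha_\mathrm{M}\eta)}\beta^{1/\alpha_\mathrm{M}}$; this reproduces the two-part expression for $\Upsilon_1$. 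Multiplying the desired-signal factor by $\exp\!\big(-2\pi S\lambda_\mathrm{M}\!\int(1-\Upsilon_1)\,r\,dr\big)$ and integrating against $e^{-x}\,dx$ assembles $\Xi_1(t)$, and collecting all the factors gives \eqref{macrocell_capacity}.

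The main obstacle is handling the two hidden correlations rather than any single integral. First, the interfering CUEs' locations --- and hence their powers --- are correlated with the typical link's serving distance; I would manage this with the exclusion-ball heuristic reflected in the lower limit of the inner integral in \eqref{Xi_1}, and treat each interfering CUE's power as governed by an independent serving-distance law. Second --- and this is exactly why \textbf{Lemma}~\ref{Lemma1} is proved first --- a D2D transmitter's power depends on its distance to its \emph{nearest} MBS, whereas the interference it inflicts on the typical MBS depends on a \emph{different} distance; I would decouple these and keep only the marginal law of $P_\mathrm{D}$. The remaining care is routine: justifying the Tonelli interchange, checking convergence of the $t$-integral at $0$ and $\infty$ (for small $t$ the factor $1-e^{-tS}\sim tS$ cancels the $t^{-1}$ singularity), and carrying the nested $\min\{\cdot,\cdot\}$-averaging through to $\Upsilon_1$ without error.
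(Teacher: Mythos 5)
Your proposal is correct and follows essentially the same route as the paper's Appendix B: the Hamdi/Frullani integral identity for $\ln(1+S/D)$, conditioning on the serving distance with the nearest-neighbour law and the substitution $x=\pi\lambda_\mathrm{M}r^2$, channel hardening $h_{o,\mathrm{M}}\approx N-S+1$, the PGFL with exclusion ball and the two-branch power average at $r_o$ for $\Upsilon_1$, and the PGFL plus the fractional moment $\mathbb{E}\{P_\mathrm{D}^{2/\alpha_\mathrm{M}}\}$ from \textbf{Lemma}~\ref{Lemma1} for $\Xi_2$. Your explicit discussion of the Tonelli interchange and of the decoupling approximations (independent serving-distance laws for interfering CUEs, marginalizing the D2D power) is more careful than the paper, which leaves these implicit.
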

\begin{proof}
The proof is provided in Appendix B.
\end{proof}  }

{It is indicated from \textbf{Theorem 1}  that the SE of a typical CUE $\overline{R}_\mathrm{C}$ is an increasing function of $N$, since add more massive MIMO antennas will increase power gains. It is a decreasing function of $S$, since serving more CUEs in each cell will decrease the power gain and increase the inter-cell interference. In addition, $\overline{R}_\mathrm{C}$ is also a decreasing function of $\lambda_\mathrm{D}$, since more D2D transmissions will give rise to severer D2D-to-cellular interference. In addition, when $\eta=0$ and $I_\mathrm{th}=\infty$, the result given in \eqref{macrocell_capacity} reduces to the fixed transmit power case.}

Based on \textbf{Theorem} \ref{cuerate}, the area SE (bps/Hz/m$^{2}$) achieved by the cellular is calculated as
\begin{align}\label{Area_rate_cellular}
\mathcal{A}_\mathrm{C}= \overline{R}_\mathrm{C} S  \lambda_\mathrm{M}.
\end{align}

 D2D density plays a dominant role in the level of D2D-to-cellular interference, which has a big effect on the cellular SE. Thus, we have the following important scale property.
\begin{Scale_property}
{Given a targeted SE} $\overline{R}_\mathrm{C}^\mathrm{th}$ of the CUE, it is achievable when the D2D density satisfies
{\begin{align}\label{corollary_1}
\lambda_\mathrm{D} \leq \left((N-S+1) \frac{X_1}{2^{\overline{R}_\mathrm{C}^\mathrm{th}}-1}-X_2\right)(X_3)^{-1},
\end{align}}
where
{ \begin{align}\label{X_1_theo}
X_1 & = \exp\Bigg\{\ln \left( {\frac{{{P^\mathrm{C}_{\max }}}}{{{P_o}{\beta ^{ - \eta }}}}} \right)\exp \left( { - \pi {\lambda _\mathrm{M}} r_o^2} \right) + \ln \left( {{P_o}{\beta ^{ - \eta + 1}}} \right) -  \nonumber \\
& \frac{\alpha_\mathrm{M}}{2}\psi\left( 1 \right) +\frac{\alpha_\mathrm{M}}{2}\ln \left( {\pi {\lambda_\mathrm{M}}} \right) + \frac{\eta {\alpha_\mathrm{M}} }{2}\big(\ell + \nonumber\\
&\Gamma\left(0,r_o^2 \pi {\lambda_\mathrm{M}}\right)+2e^{-r_o^2 \pi {\lambda_\mathrm{M}}}\ln(r_o)+\ln(\pi {\lambda_\mathrm{M}})\big) \Bigg\}
\end{align} }
with the digamma function $\psi\left(\cdot\right)$~\cite{Abramowitz} and Euler-Mascheroni constant $\ell \approx 0.5772$, and
\begin{equation}
\hspace{-0.5 cm}\left\{\begin{aligned}
X_2 &=2 \pi \lambda_\mathrm{M} \int_0^\infty \beta 2\pi S{\lambda_\mathrm{M}} \bar{P}_\mathrm{C} \frac{x^{2- {\alpha_\mathrm{M}}}}{\alpha_\mathrm{M}-2}  x \exp\left(-\pi \lambda_\mathrm{M}  x^2\right) dx +\sigma^2,\\
X_3&=2 \pi \beta   \left(\frac{D_o^{2 - {\alpha_\mathrm{M}}}}{2} +\frac{D_o^{2 - {\alpha_\mathrm{M}}}}{{\alpha_\mathrm{M}}-2}\right)\bar{P}_\mathrm{D},
\end{aligned}\right.
\end{equation}
where $\bar{P}_\mathrm{C}$ and $\bar{P}_\mathrm{D}$ given by \eqref{Lambda_1} and \eqref{Lambda_2} (at the top of next page), and represent the average transmit powers of CUE and D2D transmitter, respectively,  ${\rm{\mathbf{1}}}\left(A\right)$ is the indicator function that returns one if the condition $A$ is satisfied,  $\varpi_1(x)=\max\left\{D_o,{{\left( {\frac{{\beta x}}{I_\mathrm{th}}} \right)}^{1/{\alpha _{\rm{M}}}}}\right\}$, and $D_o$ is the reference distance, which is utilized to avoid singularity caused by proximity~\cite{Kaibin2014}\footnote{Note that the reference distance can also represent the minimum distance between a D2D transmitter and the typical serving MBS in the practical scenario~\cite{Heath_ITA_2012}.}.
\begin{figure*}[!t]
\normalsize
\begin{align}\label{Lambda_1}
\bar{P}_\mathrm{C}=P_o \beta^{-\eta} {\left( {\pi {\lambda_\mathrm{M}}} \right)^{ - \frac{{\eta {\alpha_\mathrm{M}}}}{2}}}\left( {\Gamma \left( {1 + \frac{{\eta {\alpha_\mathrm{M}}}}{2}} \right) - \Gamma \left( {1 + \frac{{\eta {\alpha _\mathrm{M}}}}{2},\pi {\lambda_\mathrm{M}}\sqrt {{r_o}} } \right)} \right) + {{P^\mathrm{C}_{\max }}} \exp \left( { - \pi {\lambda_\mathrm{M}} r_o^2} \right)
\end{align}
\hrulefill
\begin{align}
\label{Lambda_2}
\bar{P}_\mathrm{D}=\int_0^{P_\mathrm{max}^\mathrm{D}} \left({\rm{\mathbf{1}}}\left( x < \frac{D_o^{\alpha _{\rm{M}}}I_\mathrm{th}}{\beta}  \right) \left(1-\exp\left(-\pi \lambda_\mathrm{M} D_o^2\right) \right)+\exp\left(-\pi \lambda_\mathrm{M} (\varpi_1(x))^2\right)\right) dx
\end{align}
\hrulefill
\end{figure*}
\end{Scale_property}
\begin{proof}
The proof is provided in Appendix C.
\end{proof}

{ From \textbf{Scale Property 1}, we find that given a targeted SE, the number of D2D links needs to be lower than a critical value, to limit the D2D-to-cellular interference. Adding more massive MIMO antennas can allow cellular networks to accommodate more underlaid  D2D links.}

For a typical D2D link, its SE can be obtained as follows.
\begin{theorem}
The SE for a typical D2D link with a given distance $d_o$ is given by
\begin{align}\label{d2d_capacity}
\overline{R}_\mathrm{D}=1/\ln(2) \int_0^\infty  \frac{1}{t}\left(1 - \Xi_3\left(t\right)\right) \Xi_4\left(t\right){e^{ - t{\sigma ^2}}}dt,
\end{align}
where $\Xi_3\left(t\right)$ and $\Xi_4\left(t\right)$ are given by \eqref{Xi_3} and \eqref{Xi_4} at the next page.
\end{theorem}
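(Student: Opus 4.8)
The plan is to start from the definition $\overline{R}_\mathrm{D}=\frac{1}{\ln 2}\,\mathbb{E}\!\left[\ln\!\left(1+\mathrm{SINR}_\mathrm{D}\right)\right]$ with $\mathrm{SINR}_\mathrm{D}$ as in \eqref{SINR_Small}. Writing $S_o=P_{o,\mathrm{D}}\,g_{o,\mathrm{D}}\,L(d_o)$ for the useful received power and $\mathcal{I}=\sigma^2+J_\mathrm{M}+J_\mathrm{D}$ for the total interference-plus-noise, one has $\ln(1+\mathrm{SINR}_\mathrm{D})=\ln(S_o+\mathcal I)-\ln(\mathcal I)$. I would then apply the Frullani-type identity $\ln b-\ln a=\int_0^\infty t^{-1}\!\left(e^{-at}-e^{-bt}\right)dt$ and use that $S_o$ is independent of $\mathcal I$ (the fading $g_{o,\mathrm{D}}$ and the mark $P_{o,\mathrm{D}}$ of the typical link are independent of the interfering point processes and their marks) to get
\begin{align*}
\overline{R}_\mathrm{D}=\frac{1}{\ln 2}\int_0^\infty \frac{1}{t}\Big(1-\mathbb{E}\big[e^{-tS_o}\big]\Big)\,\mathbb{E}\big[e^{-t\mathcal I}\big]\,dt,
\end{align*}
which already matches the shape of \eqref{d2d_capacity} upon setting $\Xi_3(t)=\mathbb{E}[e^{-tS_o}]$ and $\Xi_4(t)\,e^{-t\sigma^2}=\mathbb{E}[e^{-t\mathcal I}]$. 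The rest is the evaluation of these two expectations.

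For the signal term, since $g_{o,\mathrm{D}}\sim\exp(1)$ is independent of $P_{o,\mathrm{D}}$, I would first integrate over $g_{o,\mathrm{D}}$ to obtain $\mathbb{E}[e^{-tS_o}]=\mathbb{E}_{P_{o,\mathrm{D}}}\big[(1+t\,P_{o,\mathrm{D}}\,\beta d_o^{-\alpha_\mathrm{D}})^{-1}\big]$, and then integrate over the typical D2D transmit power using the PDF $f_{P_\mathrm{D}}$ from \textbf{Lemma}~\ref{Lemma1}; the density part yields an integral over $[0,P_\mathrm{max}^\mathrm{D})$ and the mass at $P_\mathrm{max}^\mathrm{D}$ yields an extra term weighted by $\bar\Delta(P_\mathrm{max}^\mathrm{D})$. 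This produces $\Xi_3(t)$.

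For the interference-plus-noise term, I would use the mutual independence of the noise, the inter-cell CUE interference and the D2D interference to factor $\mathbb{E}[e^{-t\mathcal I}]=e^{-t\sigma^2}\,\mathcal L_{J_\mathrm{M}}(t)\,\mathcal L_{J_\mathrm{D}}(t)$, so $\Xi_4(t)=\mathcal L_{J_\mathrm{M}}(t)\,\mathcal L_{J_\mathrm{D}}(t)$. For $\mathcal L_{J_\mathrm{D}}(t)$, Slivnyak's theorem makes $\Phi_\mathrm{D}\backslash o$ a PPP of intensity $\lambda_\mathrm{D}$; applying the PGFL, averaging the per-interferer factor over $g_{j,\mathrm{D}}\sim\exp(1)$ and over $P_{j,\mathrm{D}}$ through $f_{P_\mathrm{D}}$, and computing the radial integral $\int_0^\infty r\big(1+c\,r^{-\alpha_\mathrm{D}}\big)^{-1}dr$ in closed form gives the $\Gamma(1+\tfrac{2}{\alpha_\mathrm{D}})\Gamma(1-\tfrac{2}{\alpha_\mathrm{D}})$ factor together with the fractional moment of $P_\mathrm{D}$ (again read off from \textbf{Lemma}~\ref{Lemma1}). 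For $\mathcal L_{J_\mathrm{M}}(t)$, I would model the simultaneously active CUEs, exactly as in \textbf{Theorem}~\ref{cuerate}, by an independent PPP of intensity $S\lambda_\mathrm{M}$, each carrying the power \eqref{power_control_MBS} set by its own nearest-MBS distance; the PGFL, the expectation over the Rayleigh fading $g_{i,\mathrm{D}}$, and the distance average — split at $r_o$ because of the $\min\{\cdot,\cdot\}$ in \eqref{power_control_MBS} — deliver $\mathcal L_{J_\mathrm{M}}(t)$.

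The main obstacle is $\mathcal L_{J_\mathrm{M}}(t)$: the active-CUE set is a dependently thinned, Voronoi-coupled process rather than a true PPP, and each CUE's transmit power is statistically coupled to its position. As in the rest of the paper I would resolve this by the usual PPP approximation for the interferer locations combined with treating the serving-link distances as independent marks, and by carefully bookkeeping the two regimes $P_{i,\mathrm{C}}=P_\mathrm{max}^\mathrm{C}$ and $P_{i,\mathrm{C}}=P_o\,L(|X_{i,\mathrm{M}}|)^{-\eta}$. Substituting the resulting closed forms of $\Xi_3(t)$ and $\Xi_4(t)$ back into the $t$-integral yields \eqref{d2d_capacity}, with $\Xi_3$ and $\Xi_4$ as stated in \eqref{Xi_3} and \eqref{Xi_4}.
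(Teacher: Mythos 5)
Your proposal follows essentially the same route as the paper's Appendix D: the Frullani-type integral identity you invoke is exactly the tool the paper borrows from Hamdi's lemma to obtain the $\frac{1}{t}\left(1-\Xi_3(t)\right)\Xi_4(t)e^{-t\sigma^2}$ form, and your evaluations of $\Xi_3$ (fading first, then the mixed density-plus-atom law of $P_\mathrm{D}$ from \textbf{Lemma}~\ref{Lemma1}) and of $\Xi_4$ (independence factorization, PGFL, and the fractional moments $\mathbb{E}\{P^{2/\alpha_\mathrm{D}}\}$ with the split at $r_o$) match the paper's steps. Your explicit acknowledgement of the PPP approximation for the active-CUE process is a point the paper handles implicitly, but it does not change the argument.
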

\begin{figure*}[!t]
\normalsize
\begin{align}\label{Xi_3}
&\Xi_3\left(t\right)=\int_0^{\varpi_0} \frac{e^{-x}}{1+t{I_\mathrm{th}}(\frac{x}{\pi \lambda_\mathrm{M}})^{\frac{\alpha_\mathrm{M}}{2}}  d_o^{-\alpha_\mathrm{D}}}   dx
+ \frac{\bar{\Delta}\left(P_\mathrm{max}^\mathrm{D}\right)}{1+t {P_\mathrm{max}^\mathrm{D}} \beta d_o^{-\alpha_\mathrm{D}}}
\end{align}
\hrulefill
\begin{align}\label{Xi_4}
&\Xi_4\left(t\right)=\exp\Big(-\pi \beta^{\frac{2}{\alpha_\mathrm{D}}} \Gamma(1+\frac{2}{\alpha_\mathrm{D}}) \Gamma(1-\frac{2}{\alpha_\mathrm{D}}) t^\frac{2}{\alpha_\mathrm{D}}(S \lambda_\mathrm{M} \Omega_1 + \lambda_\mathrm{D} \Omega_2)  \Big),
\end{align}
where
\begin{align}\label{Omega_123}
&\Omega_1=(P_o \beta^{-\eta})^{\frac{2}{\alpha_\mathrm{D}}} (\pi \lambda_\mathrm{M})^{-\frac{\eta \alpha_\mathrm{M}}{\alpha_\mathrm{D}}}
\Big(\Gamma(1+\frac{\eta \alpha_\mathrm{M}}{\alpha_\mathrm{D}})-\Gamma(1+\frac{\eta \alpha_\mathrm{M}}{\alpha_\mathrm{D}},\pi \lambda_\mathrm{M} r_o^2)\Big)+\big(P_\mathrm{max}^{\mathrm{C}}\big)^{\frac{2}{\alpha_\mathrm{D}}} e^{-\pi \lambda_\mathrm{M} r_o^2}
\end{align}
\begin{align}\label{Omega_223}
&\Omega_2=\left(\pi \lambda_\mathrm{M}( \frac{\beta }{I_{{\rm{th}}}})^{\frac{2}{\alpha _{\rm{M}}}}\right)^{-\frac{\alpha _{\rm{M}}} {\alpha _{\rm{D}}}} \Big(\Gamma(1+\frac{\alpha_\mathrm{M}}{\alpha_\mathrm{D}})-\Gamma(1+\frac{\alpha_\mathrm{M}}{\alpha_\mathrm{D}},\varpi_0)\Big)+ \left(P_\mathrm{max}^\mathrm{D}\right)^{\frac{2}{\alpha_\mathrm{D}}} {\bar{\Delta}\left(P_\mathrm{max}^\mathrm{D}\right)}
\end{align}

\hrulefill
\end{figure*}

\begin{proof}
The proof is provided in Appendix D.
\end{proof}

{It is indicated from \textbf{Theorem 2}  that the SE for a typical D2D link is independent of massive MIMO antennas, and is a decreasing function of $S$ due to the fact that more uplink transmissions will result in severer cellular-to-D2D interference. Moreover, it is also a decreasing function of $\lambda_\mathrm{D}$, since more inter-D2D interference deteriorates the typical D2D transmission.}

Based on \textbf{Theorem} 2, the area SE achieved by the D2D tier is
\begin{align}\label{Area_rate_D2D_1}
\mathcal{A}_\mathrm{D}= \overline{R}_\mathrm{D}   \lambda_\mathrm{D}.
\end{align}
%
%
%
Since D2D density also has a substantial effect on the level of inter-D2D interference, which greatly affects the SE of D2D. Thus, we have the following important scale property.
\begin{Scale_property}\label{corollary_11}
The {targeted SE} $\overline{R}_\mathrm{D}^\mathrm{th}$ of the D2D transmitter can be achieved when the D2D density satisfies
{\begin{align}\label{corollary_2}
\lambda_\mathrm{D} \leq \left(\frac{X_4}{2^{\overline{R}_\mathrm{D}^\mathrm{th}}-1}-S {\lambda _\mathrm{M}} X_5\right)(X_6)^{-1},
\end{align} }
where
{\begin{align}
\label{X_4_theo}
{X_4} = &\exp \bigg\{ \int_0^{P_{\max }^{\rm{D}}} {\ln \left( x \right){f_{{P_{\rm{D}}}}}\left( x \right)dx + \ln \left( {P_{\max }^{\rm{D}}} \right)}  \nonumber\\
&\qquad \qquad\bar \Delta \left( {P_{\max }^{\rm{D}}} \right)+ \ln \left( {\beta {d_o}^{ - {\alpha _{\rm{D}}}}} \right) + \ell \bigg\},\\
 {X_5} =&2\pi \bigg(P_o \beta^{-\eta} {\left( {\pi {\lambda_\mathrm{M}}} \right)^{ - \frac{{\eta {\alpha_\mathrm{M}}}}{2}}}\Big( \Gamma \left( {1 + \frac{{\eta {\alpha_\mathrm{M}}}}{2}} \right)  \nonumber\\
&- \Gamma \left( {1 + \frac{{\eta {\alpha _M}}}{2},\pi {\lambda_\mathrm{M}}\sqrt {{r_o}} } \right) \Big)+ {{P^\mathrm{C}_{\max }}} \nonumber\\
&\exp \left( { - \pi {\lambda_\mathrm{M}} r_o^2} \right)\bigg) \beta \left(\frac{D_1^{2 - {\alpha_\mathrm{D}}}}{2} +\frac{D_1^{2 - {\alpha_\mathrm{M}}}}{{\alpha_\mathrm{D}}-2}\right),\\
{X_6} =& 2\pi \beta  \left(\frac{D_2^{2 - {\alpha_\mathrm{D}}}}{2} +\frac{D_2^{2 - {\alpha_\mathrm{D}}}}{{\alpha_\mathrm{D}}-2}\right)\bar{P}_\mathrm{D},
\end{align}}
in which $D_1$  and $D_2$ are the reference distances, $\bar{P}_\mathrm{D}$ is given by \eqref{Lambda_2}.

\end{Scale_property}
\begin{proof}
The proof is provided in Appendix E.
\end{proof}

{ From \textbf{Scale Property 2}, we find that given a targeted SE, the number of D2D links needs to be lower than a critical value, to limit the inter-D2D interference. The number of D2D links that achieves the targeted SE decreases when each MBS serves more users at the same time and frequency band, due to severer cellular-to-D2D interference.}

\subsection{Energy Efficiency}
In this subsection, { we evaluate the EE of cellular and D2D transmissions, which is of paramount importance in 5G systems due to the fact that one of key performance indicators (KPIs) in 5G is ten times lower energy consumption per service than the today's networks~\cite{5GPPP_2015}. In this paper, one of our aims is to find out whether the uplink EE of massive MIMO cellular networks is comparable to that of D2D.}  {The EE is defined as the ratio of the SE to the average power consumption.}

The average power consumption of a CUE is calculated as
\begin{equation}
\label{Averag_2016}
\overline {P}^{total}_\mathrm{C}=P_f+\frac{\mathrm{\overline{P}_\mathrm{C}}}{\zeta},
\end{equation}
where $P_f$ is the fixed circuit power consumption,  $\mathrm{\overline{P}_\mathrm{C}}$ is the average transmit power given by \eqref{Lambda_1}, and $\zeta$ is the power amplifier efficiency. Thus, the EE for a typical CUE is derived as
\begin{align}
\label{CUE_U_EE}
\mathrm{\overline{EE}_C}=\frac{\overline{R}_\mathrm{C}}{ \overline {P}^{total}_\mathrm{C}},
\end{align}
where $\overline{R}_\mathrm{C}$ is the average SE given by \eqref{macrocell_capacity}. {{For uplink transmission, the average power consumption for a CUE is only dependent on the maximum transmit power level and the path loss compensation, as shown in \eqref{Averag_2016}. Therefore, $\mathrm{\overline{EE}_C}$ is an increasing function of $N$ and a decreasing function of $S$, since $\overline{R}_\mathrm{C}$ increases with $N$ and decreases with increasing $S$, according to \textbf{Theorem 1}. }}

Likewise,  the average power consumption of a D2D transmitter is calculated as
\begin{equation}
\overline {P}^{total}_\mathrm{D}=P_f+\frac{\mathrm{\overline{P}_\mathrm{D}}}{\zeta},
\end{equation}
where $\mathrm{\overline{P}_\mathrm{D}}$ is the average transmit power. Based on \eqref{power_D2D_exp} in Appendix B, $\mathrm{\overline{P}_\mathrm{D}}$ is given by
\begin{align}\label{average_P_D}
\mathrm{\overline{P}_\mathrm{D}}=\frac{{(P_\mathrm{max}^\mathrm{D})^{{2}}}}{\varpi_2}(1-e^{-\varpi_2}-\varpi_2 e^{-\varpi_2})+ \left(P_\mathrm{max}^\mathrm{D}\right)^{{2}} {\bar{\Delta}\left(P_\mathrm{max}^\mathrm{D}\right)}
\end{align}
with $\varpi_2=\pi \lambda_\mathrm{M}\left( \frac{\beta P_\mathrm{max}^\mathrm{D}}{I_{{\rm{th}}}} \right)^{2}$. Thus, the EE for a typical D2D pair is derived as
\begin{align}
\label{D2D_U_EE}
\mathrm{\overline{EE}_D}=\frac{\overline{R}_\mathrm{D}}{\overline {P}^{total}_\mathrm{D}},
\end{align}
where $\overline{R}_\mathrm{D}$ is the average SE given by \eqref{d2d_capacity}. {{Similarly, the EE for a typical D2D pair is independent of massive MIMO antennas, and  is a decreasing function of $S$, since more cellular-to-D2D interference decreases the SE.}}

\begin{table*}[tb]
\centering
\caption{Simulation Parameters}\label{table1}
\begin{tabular}{ |p{6cm}|p{3.5cm}| p{4.2cm}|}
 \hline
 Parameter & Symbol & Value \\
 \hline
 Pathloss exponent to MBS & $\alpha_\mathrm{M}$ & 3.5\\
  \hline
 Pathloss exponent to D2D &  $\alpha_\mathrm{D}$  & 4  \\
  \hline
 The maximum transmit power of MUE & $P^\mathrm{C}_\mathrm{max}$ & 23 dBm\\
  \hline
 Bandwidth   & BW  & 5 MHz \\
  \hline
 The noise power & $\sigma^2$ & $-170+10 \times \log_{10}(\mathrm{BW})$ dBm\\
  \hline
 The power density~\cite{safjan2013open} & $P_o$ & -80 dBm\\
  \hline
Static power consumption & $P_f$ & 100 mW\\
 \hline
 Power amplifier efficiency & $\zeta$ & 0.5  \\
  \hline
\end{tabular}
\end{table*}

\section{Numerical Results}

In this section, numerical results are presented to evaluate the area average SE and average EE of the cellular and D2D in the D2D underlaid massive MIMO cellular network. Such a network is assumed to operate at a carrier frequency of 1 GHz. Our results show the effect of massive MIMO in terms of user number $S$, the effect of D2D in terms of its density $\lambda_D$ and the effect of power control in terms of the compensation factor $\eta$ and interference threshold $I_\mathrm{th}$. {The basic parameters that are adopted in all the  simulations are summarized in Table 1}, and  it is assumed that the density of MBSs is $\lambda_\mathrm{M}=\left(500^2 \times \pi\right)^{-1} \mathrm{m}^{-2}$ in a circular region with radius $1\times 10^4$ m.

In the figures, the analytical area SE curves for the cellular and D2D are obtained from \eqref{Area_rate_cellular} and \eqref{Area_rate_D2D_1}, respectively, and the analytical EE curves for a CUE or D2D transmitter are obtained from \eqref{CUE_U_EE} and \eqref{D2D_U_EE}, respectively. Monte Carlo simulated values of the uplink spectrum efficiency marked by `o' are numerically obtained to validate the analysis.

\subsection{Power Control Effect}
In this subsection, we illustrate the effects of power control on the area SE and EE,  to demonstrate the effectiveness of  the proposed power control solution.

\begin{figure}
\centering
\includegraphics[width=3.2 in, height=2.7 in]{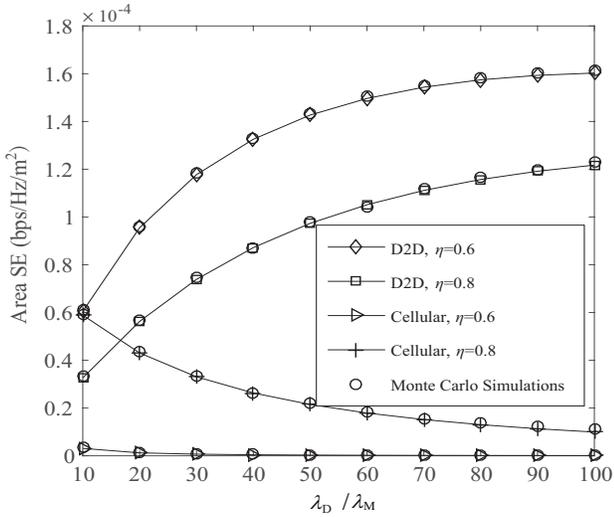}
\caption{ { Effects of D2D density with the variation of cellular power control on the area SE: $d_o=35$ m, $S=20$, $N=400$, $P^\mathrm{D}_\mathrm{{max}}= 15$  dBm and ${I_\mathrm{th}} / {\sigma^2} = 10$ dB.} }
\label{Fig1_CUE_PC}
\end{figure}
\begin{figure}
\centering
\includegraphics[width=3.7 in, height=2.8 in]{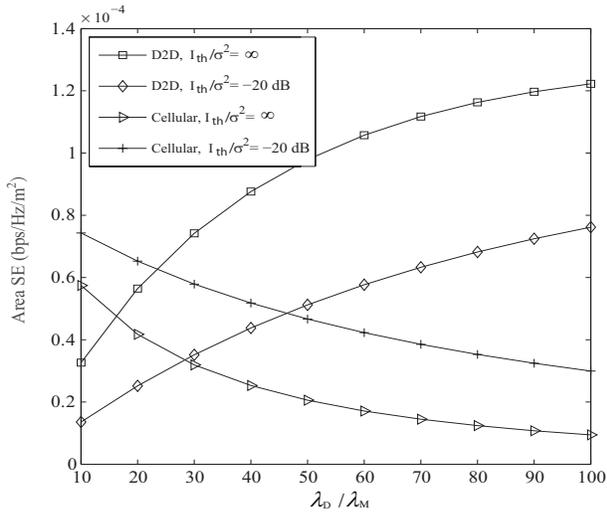}
\caption{  { Effects of D2D density with the variation of D2D power control on the area SE: $d_o=35$ m, $S=20$, $N=400$, $P^\mathrm{D}_\mathrm{{max}}= 15$  dBm and $\eta= 0.8$.}}
\label{Fig_I_th_SE}
\end{figure}

Fig. \ref{Fig1_CUE_PC} shows the effects of D2D density with the variation of cellular power control on the area SE. We see that uplink power control applied in the massive MIMO macrocells can significantly affect the area SE of the D2D and the cellular. Specifically, when the
transmit power of the CUE is controlled at a low level, the area SE of the D2D is improved, because D2D receivers experience less interference from the CUEs. In contrast, the area SE of the cellular decreases with the CUE transmit power. The  cellular performance is greatly degraded when the D2D links are dense, due to the severe interference from the D2D transmitters, which reveals that D2D-to-cellular interference mitigation is required for ensuring the uplink quality of service in the cellular networks.

Fig. \ref{Fig_I_th_SE} shows the effects of D2D density with the variation of D2D power control on the area SE. We observe that without D2D power control (i.e., ${I_\mathrm{th}} / {\sigma^2}=\infty$), the area SE of D2D tier is much higher than the massive MIMO aided cellular when D2D density is large. In particular, the area SE of the cellular is drastically deteriorated by the severe D2D-to-cellular interference. The implementation of the proposed D2D power control policy (e.g., ${I_\mathrm{th}} / {\sigma^2}=-20$ dB in this figure.) can efficiently mitigate the D2D-to-cellular interference, and thus improve the cellular performance.

\begin{figure}
\centering
\includegraphics[width=3.6 in, height=2.6 in]{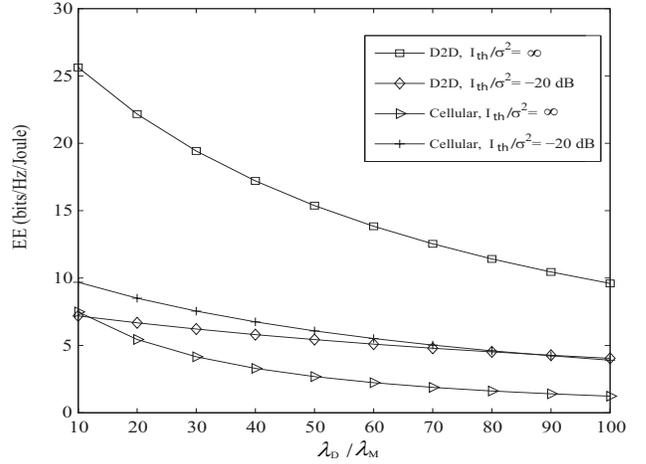}
\caption{ { Effects of D2D density with the variation of D2D power control on the EE: $d_o=35$ m, $S=20$, $N=400$, $P^\mathrm{D}_\mathrm{{max}}= 15$  dBm and $\eta= 0.8$.} }
\label{Fig_I_th_EE}
\end{figure}
\begin{figure}
\centering
\includegraphics[width=3.6 in, height=2.8 in]{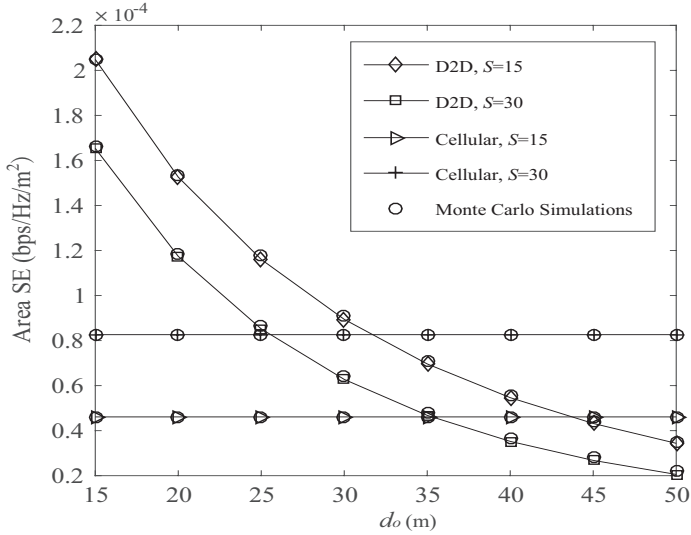}
\caption{ { Effects of D2D distance with the impact of massive MIMO on the area SE: $N=400$,  $\lambda_\mathrm{D}=30 \times \lambda_\mathrm{M}$, $P^\mathrm{D}_\mathrm{{max}}= 15$  dBm, $\eta= 0.9$ and
${I_\mathrm{th}} / {\sigma^2}= 5$ dB.}}
\label{Fig3_S}
\end{figure}

Fig. \ref{Fig_I_th_EE} shows the effects of D2D density with the variation of D2D power control on the EE. Without D2D power control, the EE of a D2D link is much higher than that of a cellular uplink, owing to the proximity. The interference increases with the D2D links, which harms both the EE of the cellular user and D2D user. The use of D2D power control enhances the EE of the cellular user, due to its SE improvement. Moreover, by properly coordinating the D2D-to-cellular interference,  the uplink EE of a massive MIMO aided cellular is comparable to that of a D2D link.

Fig. \ref{Fig3_S} shows the effects of D2D distance with the impact of massive MIMO on the area SE. It is obvious that when the distance between the D2D transmitter and its receiver grows large, the area SE of the D2D decreases, and it has no effect on the cellular performance. As more CUEs are served in each massive MIMO aided macrocell, there is a substantial increase in the area SE of the cellular, due to more multiplexing gains achieved by massive MIMO. However, when more CUEs are served in the uplink, the interference from CUEs is exacerbated, which degrades the D2D performance. Therefore, the cellular-to-D2D interference also needs to be coordinated. In addition, massive MIMO cellular can achieve better performance than D2D when the D2D distance is large.

\subsection{Massive MIMO Antennas Effect}
{In this subsection, we illustrate the effects of massive MIMO antennas on the area SE and EE. In the simulations, we set  $d_o=50$ m, $S=20$, $\lambda_\mathrm{D}=30 \times \lambda_\mathrm{M}$, $\eta=0.8$ and ${I_\mathrm{th}} / {\sigma^2} = 0$ dB.

\begin{figure}
\centering
\includegraphics[width=3.7 in, height=2.9 in]{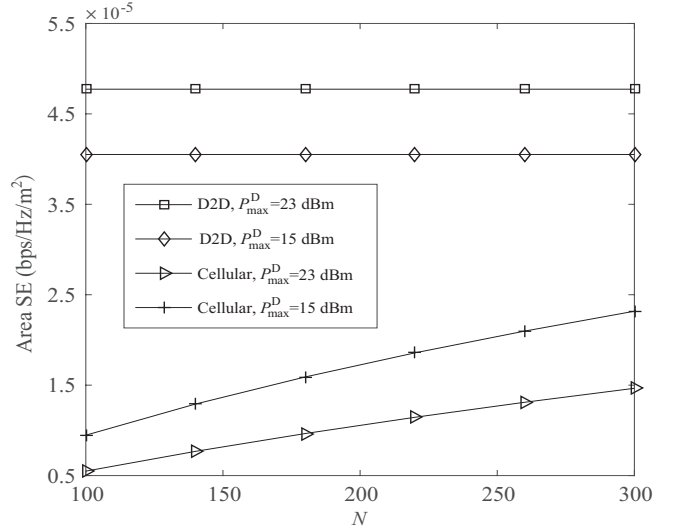}
\caption{ { Effects of massive MIMO antennas with the variation of maximum D2D transmit power on the area SE.}}
\label{Fig_massiveMIMO_antennas_SE}
\end{figure}

Fig.~\ref{Fig_massiveMIMO_antennas_SE} shows the effects of massive MIMO antennas with the variation of maximum D2D transmit power on the area SE. As confirmed in \textbf{Theorem 1}, the area SE increases with $N$ because of obtaining more power gains. As confirmed in \textbf{Theorem 2}, increasing massive MIMO antennas has no effect on the D2D SE. When larger maximum D2D transmit power is allowed, the area SE of the D2D is enhanced. However, the area SE of the cellular decreases due to the severer D2D-to-cellular interference.

\begin{figure}
\centering
\includegraphics[width=3.7 in, height=2.9 in]{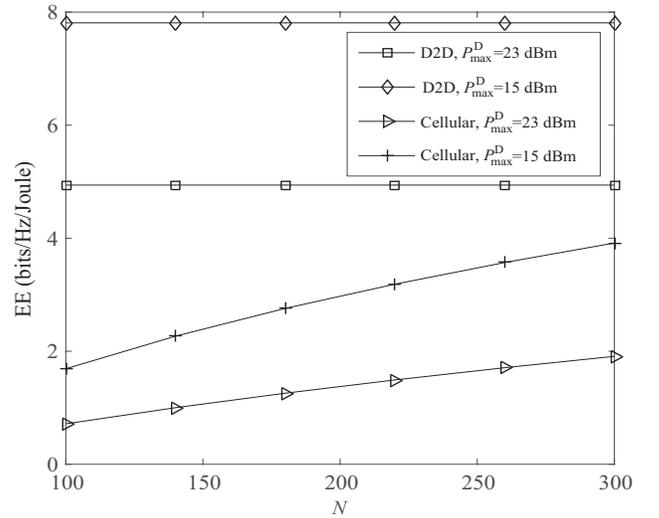}
\caption{ { Effects of massive MIMO antennas with the variation of maximum D2D transmit power on the EE.} }
\label{Fig_massiveMIMO_antennas_EE}
\end{figure}

Fig.~\ref{Fig_massiveMIMO_antennas_EE} shows the effects of massive MIMO antennas with the variation of maximum D2D transmit power on the EE.
As mentioned in Section III-C, the EE of a CUE increases with $N$ because of larger SE. Increasing $N$ has no effect on the EE of a D2D transmitter. Although Fig.~\ref{Fig_massiveMIMO_antennas_SE} shows that larger maximum D2D transmit power can improve the SE of the D2D, the EE of the D2D can be reduced because of more D2D power consumption. In addition, the EE of a CUE  decreases due to larger D2D-to-cellular interference.
}

\subsection{Interplay between Massive MIMO and D2D}

\begin{figure}[tb]
\centering
\subfigure[ Area SE of the cellular.] {\label{Fig_mue_se_3D}
\includegraphics[width=3.2 in, height=2.6 in]{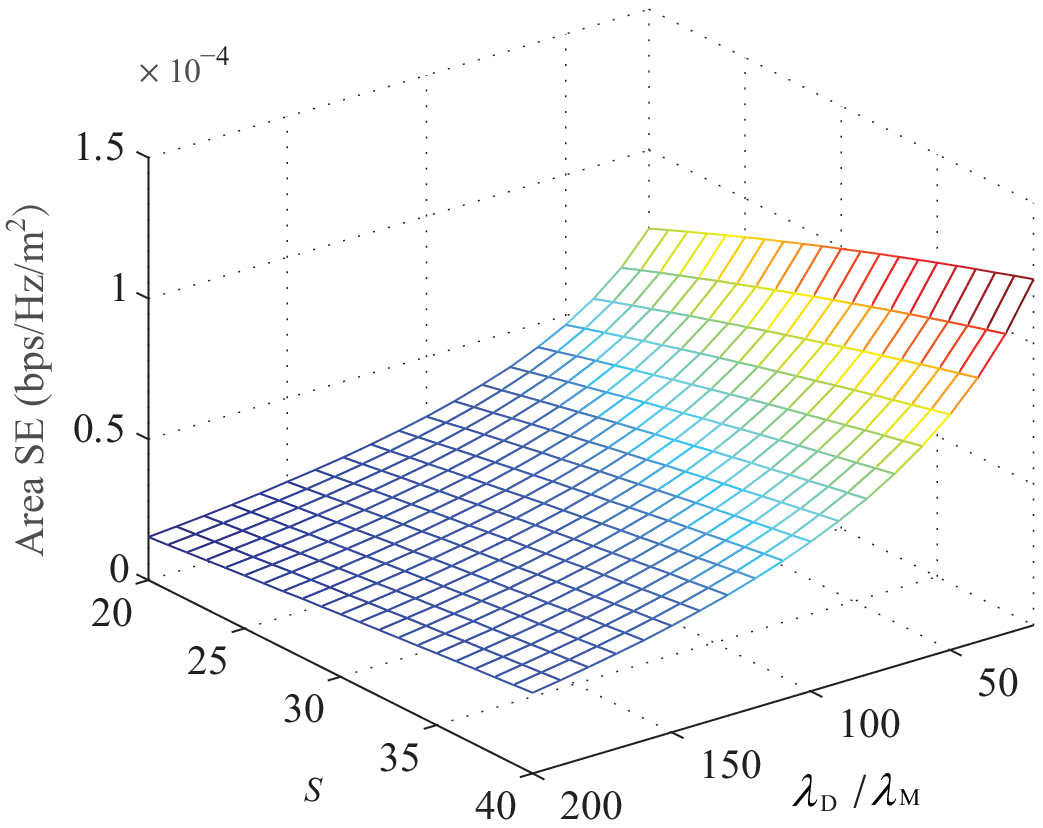}
}
\subfigure[Area SE of the D2D.]{
\label{Fig_d2d_se_3D}
  \includegraphics[width=3.2 in, height=2.6 in]{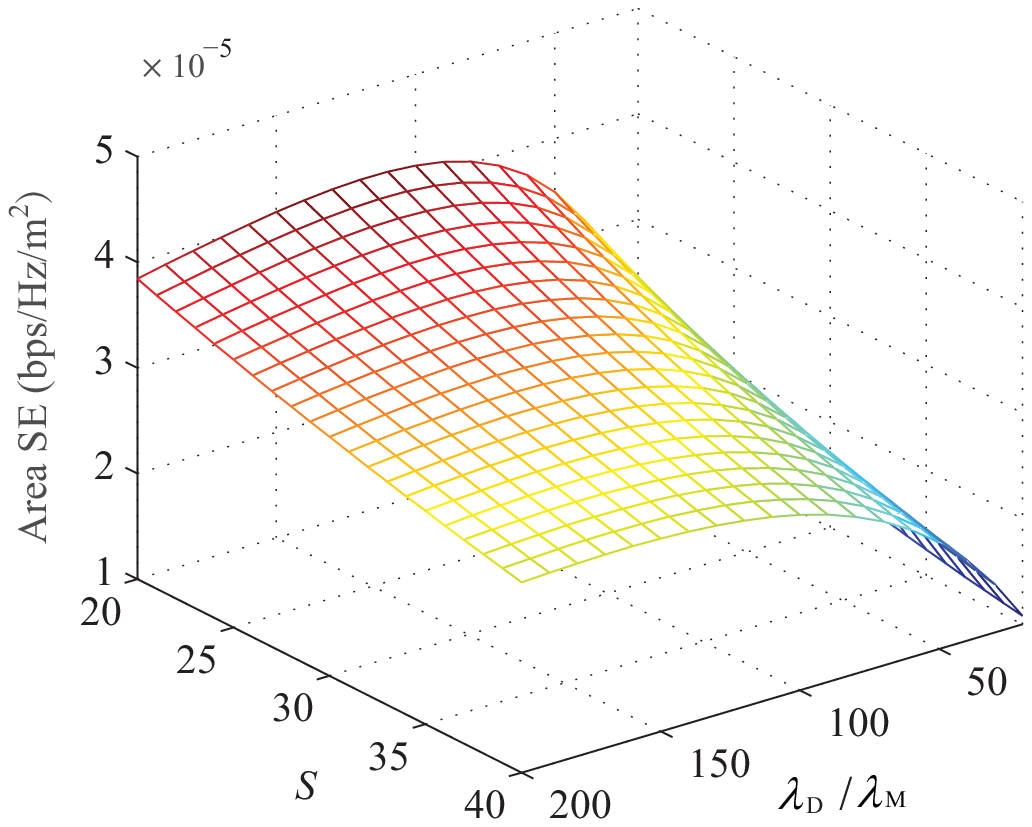}}
 \caption{Area SE of uplink D2D underlaid massive MIMO cellular networks.}
\label{Fig5}
\end{figure}

\begin{figure}[tb]
\centering
\subfigure[EE of a CUE.] {\label{Fig_mue_ee_3D}
\includegraphics[width=3.2 in, height=2.6 in]{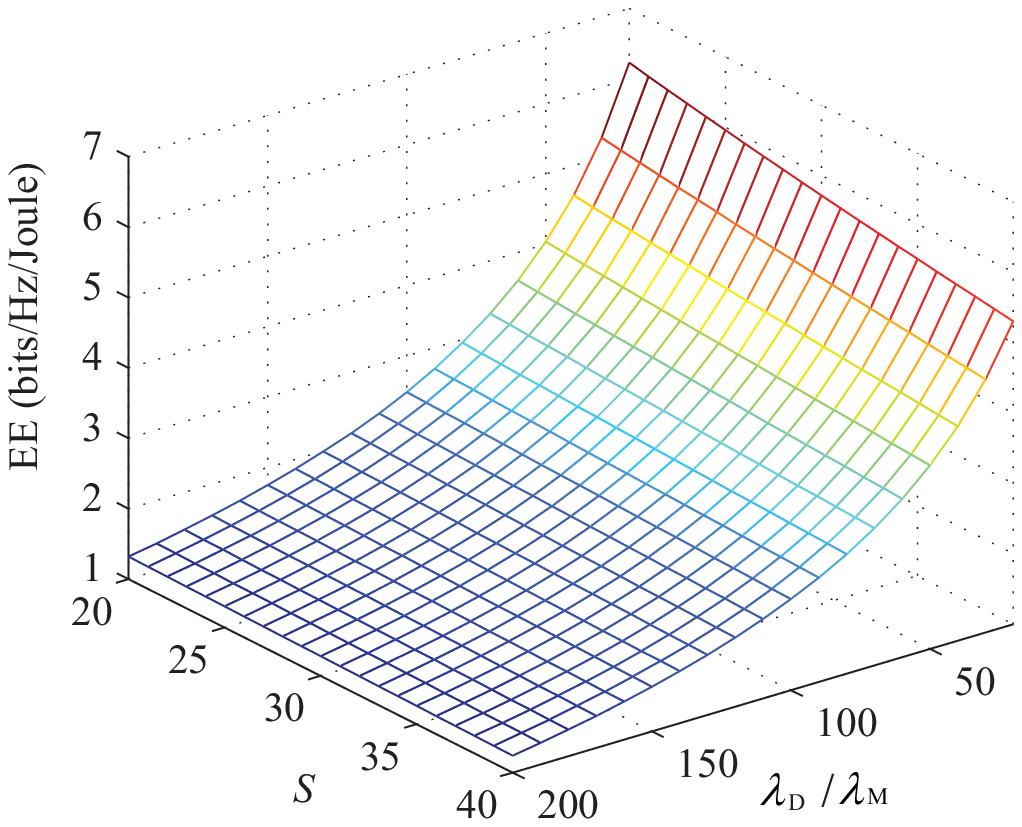}
}
\subfigure[EE of a D2D transmitter.]{
\label{Fig_d2d_ee_3D}
  \includegraphics[width=3.2 in, height=2.6 in]{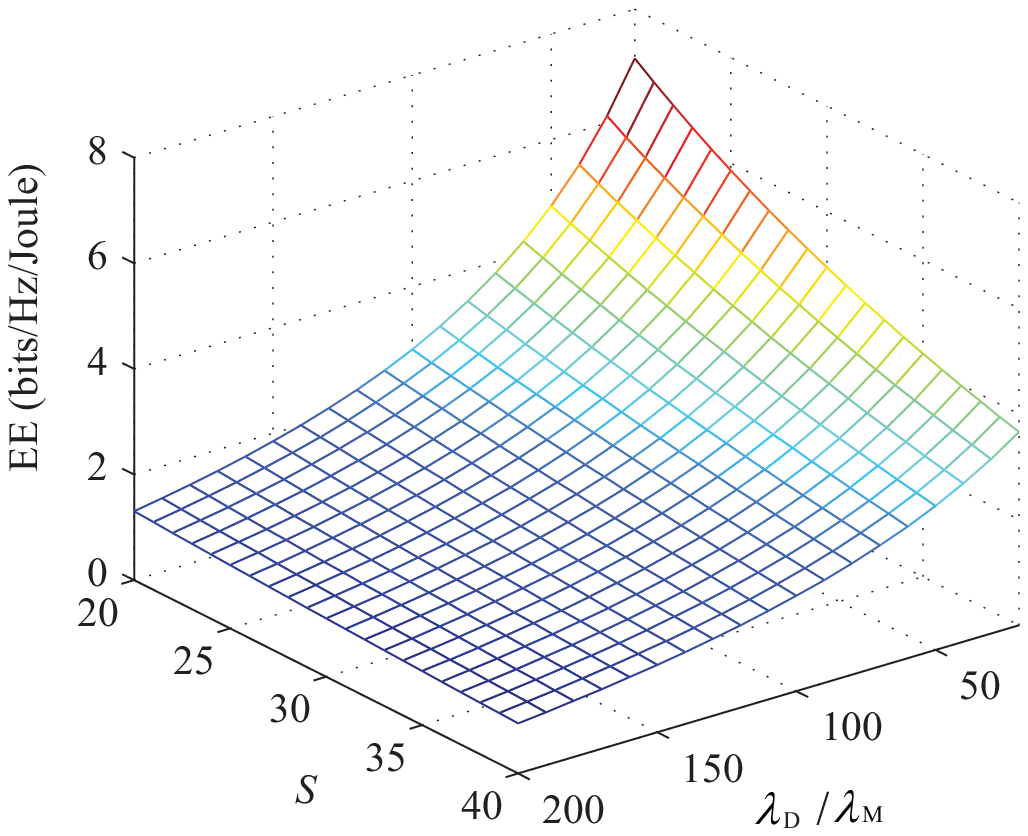}}
\caption{EE of uplink D2D underlaid massive MIMO cellular networks.}
\label{Fig6}
\end{figure}

In this subsection, we illustrate the interplay between massive MIMO and D2D. Specifically, massive MIMO allows MBS to accommodate more uplink information transmissions, and D2D links can be dense. Therefore, it is crucial to identify their combined effect. {In the simulations, we set  $d_o=50$ m, $\eta=0.9$, $N=400$, $P^\mathrm{D}_\mathrm{{max}}= 15$ dBm and ${I_\mathrm{th}} / {\sigma^2} = 0$ dB.}

Fig. \ref{Fig_mue_se_3D} shows the effects of different $S$ and D2D densities on area SE of the cellular. Serving more CUEs can improve the area SE of the cellular, due to the large multiplexing gains provided by massive MIMO. However, when D2D links grow large (e.g., $\lambda_\mathrm{D}=100 \times \lambda_\mathrm{M}$ in this figure.), increasing $S$ will not result in a big improvement of area SE. The reason is that D2D-to-cellular interference becomes severe in the dense D2D scenarios, which reduces the SE of a CUE.
Fig. \ref{Fig_d2d_se_3D} shows the effects of different $S$ and D2D densities on area SE of the D2D. We see that more cellular uplink transmissions will deteriorate the area SE of the D2D, due to the increase of cellular-to-D2D interference. More importantly, there exists the optimal D2D density value for maximizing the area SE of the D2D, beyond which, the area SE of the D2D decreases since a D2D user also suffers severe interference from other D2D transmissions.

Fig. \ref{Fig_mue_ee_3D} shows the effects of different $S$ and D2D densities on EE of a cellular user. We see that serving more CUEs will decrease the EE, which can be explained by two-fold: 1) The massive MIMO array gain allocated to each CUE decreases; and 2) the interference increases since there are more cellular transmissions. The D2D-to-cellular interference has a big adverse effect on the EE of a CUE. Fig. \ref{Fig_d2d_ee_3D} shows the effects of different $S$ and D2D densities on EE of a D2D transmitter. When more uplink transmissions are supported by massive MIMO aided cellular, the cellular-to-D2D interference increases, which has a detrimental effect on the EE of a D2D transmitter. The interference from other D2D transmissions also degrades the D2D performance. Moreover, it is indicated from Fig. \ref{Fig_mue_ee_3D} and Fig. \ref{Fig_d2d_ee_3D} that the EE of a CUE can be comparable to that of a D2D transmitter.

\section{Conclusion}
This paper took into account the uplink power control in the D2D underlaid massive MIMO cellular networks. The open-loop power control was  adopted to control the cellular user's transmit power, to mitigate the inter-cell and cellular-to-D2D interference. The D2D transmit power was controlled such that the average D2D signal power received by base stations is not larger than a certain value, to mitigate the D2D-to-cellular interference. We also considered the maximum transmit power constraints at the cellular users and D2D transmitters. We developed a tractable approach to provide the exact expressions for the area SE of the cellular and D2D tier. Two important properties were obtained, which can be viewed as sufficient conditions for satisfying the targeted SE. The average power consumption for a cellular user or D2D transmitter was derived to evaluate EE. Numerical results were presented to confirm the effectiveness of the proposed power control design.

\section*{Appendix A: A proof of Lemma 1}\label{App:A}

Based on \eqref{Power_control_D2D}, the cumulative distribution function (CDF) of ${P_{{\mathrm{D}}}}$  is written as
\begin{align}\label{pf_eq1}
& F_{P_{{\mathrm{D}}}}\left(x\right)=\Pr\left( {P_{{\mathrm{D}}}} \leq x \right) \nonumber\\
&=\Pr\left( \min\left\{\frac{I_\mathrm{th}}{L\left(\left|X_{\mathrm{D}, \mathrm{M}}\right|\right)},P_\mathrm{max}^\mathrm{D}\right\} \leq x \right) \nonumber\\
&=\left\{ \begin{array}{l}
1,\quad x \geq P_\mathrm{max}^\mathrm{D} \quad \\
\Delta\left(x\right),x < P_\mathrm{max}^\mathrm{D}
\end{array} \right. \nonumber\\
&=\mathrm{U}\left(x-P_\mathrm{max}^\mathrm{D}\right)\left(1-\Delta\left(x\right)\right)+\Delta\left(x\right),
\end{align}
where $\mathrm{U}\left(\cdot\right)$ is the unit step function denoted as $\mathrm{U}\left(x\right)=\left\{ \begin{array}{l}
1,\;\,x \ge 0\\
0,\;\,x < 0\quad
\end{array} \right.,$ and $\Delta\left(x\right)$ is calculated as
\begin{align}\label{pf_eq2}
 \Delta\left(x\right)&=\Pr \left( \frac{I_\mathrm{th}}{L{{\left( {\left| {X_{{\rm{D}},{\rm{M}}}} \right|} \right)}} } \leq x\right)\nonumber\\
 &=\Pr \left( {\left| {X_{{\rm{D}},{\rm{M}}}} \right| \leq {{\left( {\frac{{\beta x}}{I_\mathrm{th}}} \right)}^{1/{\alpha _{\rm{M}}}}}} \right).
\end{align}
Since the PDF of the distance $\left| {X_{{\rm{D}},{\rm{M}}}} \right|$ between a D2D transmitter and its nearest MBS is given by \cite{Han-Shin2012}
\begin{align}\label{pf_eq3}
f_{\left| {X_{{\rm{D}},{\rm{M}}}} \right|}\left(r\right)= 2 \pi \lambda_\mathrm{M} r \exp\left(-\pi \lambda_\mathrm{M}  r^2\right).
\end{align}
By using \eqref{pf_eq3}, \eqref{pf_eq2} is further derived as
\begin{align}\label{pf_eq4}
\Delta\left(x\right)&= \int_0^{{{\left( {\frac{{\beta x}}{I_\mathrm{th}}} \right)}^{1/{\alpha _{\rm{M}}}}}}  {{f_{\left| {X_{{\rm{D}},{\rm{M}}}} \right|}}\left( r \right)} dr \nonumber\\
&=1-\exp\left(-\pi \lambda_\mathrm{M} {\left( {\frac{\beta x}{I_\mathrm{th}}} \right)}^{2/{\alpha _{\rm{M}}}}\right).
\end{align}
Substituting \eqref{pf_eq4} into \eqref{pf_eq1}, we have
\begin{align}\label{delta_11}
\hspace{-0.5 cm} F_{P_{{\mathrm{D}}}}\left(x\right)=1-\mathrm{U}\left(P_\mathrm{max}^\mathrm{D}-x\right)\exp\Big(-\pi \lambda_\mathrm{M} {\left( {\frac{\beta x}{I_\mathrm{th}}} \right)}^{2/{\alpha _{\rm{M}}}}\Big).
\end{align}
Taking the derivative of $F_{P_{{\mathrm{D}}}}\left(x\right)$ in \eqref{delta_11}, we obtain the PDF of ${P_{{\mathrm{D}}}}$ in \eqref{PDF_D2D} and complete the proof.

\section*{Appendix B: A proof of Theorem 1}\label{App:B}

Based on \eqref{SINR_Macro}, the SE for a typical CUE is written as
\begin{align}\label{theo_1}
\overline{R}_\mathrm{C}&=\mathbb{E}\left\{\log_2\left(1+\mathrm{SINR}_\mathrm{M}\right)\right\} \nonumber\\
&=\mathbb{E}\left\{\log_2\left(1+\frac{Z_1}{I_\mathrm{M}+I_\mathrm{D}+ {\sigma^2}}\right)\right\},
\end{align}
where $Z_1={{P_{o,\mathrm{C}}}}{h_{o,{\mathrm{M}}}} L\left( {\left|{X_{o,{\mathrm{M}}}}\right|} \right)$.
Using \cite[Lemma 1]{Hamdi_2008}, \eqref{theo_1} can be equivalently transformed as
\begin{align}\label{theo_2}
&\overline{R}_\mathrm{C}=\frac{1}{\ln\left(2\right)} \mathbb{E}\left\{ \int_{\rm{0}}^\infty  {\frac{1}{t}} \left( {1 - {e^{ - t{Z_1}}}} \right){e^{ - t\left(I_\mathrm{M}+I_\mathrm{D}+ {\sigma^2}\right)}}dt \right\} \nonumber\\
&=\frac{1}{\ln\left(2\right)} \int_{\rm{0}}^\infty  {\frac{1}{t}} \underbrace{\mathbb{E}\left\{ \left( {1- e^{ - t{Z_1}}} \right)e^{ - t I_\mathrm{M}}  \right\}}_{\Xi_1\left(t\right)} \underbrace{\mathbb{E}\left\{e^{ - tI_\mathrm{D}}\right\} }_{\Xi_2\left(t\right)} e^{ - t\sigma^2} dt.
\end{align}
We first calculate $\Xi_1\left(t\right)$ as
\begin{align}\label{E_Z_1}
\Xi_1\left(t\right) &=\int_{\rm{0}}^\infty \mathbb{E}_{\left|{X_{o,{\mathrm{M}}}}\right|=x}\left\{ \left( {1- e^{ - t{Z_1}}} \right)e^{ - t I_\mathrm{M}}  \right\} f_{\left|{X_{o,{\mathrm{M}}}}\right|} \left(x\right) dx \nonumber\\
&\mathop \approx \limits^{(a)} \int_{\rm{0}}^\infty \left( {1- e^{ - t{{{P_{o,\mathrm{C}}}}(N-S+1) \beta x^{-\alpha_\mathrm{M}}} }} \right) \mathbb{E}_{\left|{X_{o,{\mathrm{M}}}}\right|=x}\left\{ e^{ - t I_\mathrm{M}}  \right\}  \nonumber\\
&\quad\quad\quad\times f_{\left|{X_{o,{\mathrm{M}}}}\right|} \left(x\right) dx,
\end{align}
where step (a) is obtained due to the fact that $h_{o,{\mathrm{M}}}\approx N-S+1$ for large $N$, $f_{\left|{X_{o,{\mathrm{M}}}}\right|} \left(x\right)$ is the PDF of the nearest distance between the typical CUE and its serving MBS, as seen in \eqref{pf_eq3}, and $\mathbb{E}_{\left|{X_{o,{\mathrm{M}}}}\right|=x}\left\{ e^{ - t I_\mathrm{M}}  \right\}$ in \eqref{E_Z_1} can be derived as
\begin{align}\label{E_I_M_further}
&\mathbb{E}_{\left|{X_{o,{\mathrm{M}}}}\right|=x}\left\{ e^{ - t I_\mathrm{M}}  \right\}\nonumber\\
&\mathop  = \limits^{(b)} \exp\left\{-S \lambda_\mathrm{M} \int_{{\mathcal{R}^{\rm{2}}}\backslash \mathrm{B}\left(o\right)} \left(1-\mathbb{E}\left\{e^{ - t  {P_{i,\mathrm{C}}}
{h_{i,\mathrm{M}}}\beta r^{-\alpha_\mathrm{M}} }\right\} \right) rdr  \right\} \nonumber\\
&=\exp\left\{-2\pi S \lambda_\mathrm{M} \int_x^\infty \left(1-\mathbb{E}\left\{e^{ - t  {P_{i,\mathrm{C}}}
{h_{i,\mathrm{M}}}\beta r^{-\alpha_\mathrm{M}} }\right\} \right) rdr  \right\} \nonumber\\
&\hspace{-0.4 cm}\mathop  = \limits^{(c)} \exp\bigg\{-2\pi S \lambda_\mathrm{M} \int_x^\infty \Big(1-\underbrace{\mathbb{E}_{{P_{i,\mathrm{C}}}}\left\{\frac{1}{1+t  {P_{i,\mathrm{C}}}
\beta r^{-\alpha_\mathrm{M}}} \right\}}_{\Upsilon_1} \Big)  rdr  \bigg\},
\end{align}
where step (b) is the generating functional of the PPP, and step (c) is given by considering ${h_{i,\mathrm{M}}}\sim \rm{exp}(1)$. Based on the power control given in \eqref{power_control_MBS}, $\Upsilon_1$ is given by
\begin{align}\label{Upsilon_1}
\Upsilon_1&=\left(1+t  {P_\mathrm{max}^{\mathrm{C}}}
\beta r^{-\alpha_\mathrm{M}}\right)^{-1}  \int_{r_o}^\infty  f_{\left|{X_{i,{\mathrm{M}}}}\right|} \left(\nu\right) d\nu \nonumber\\
&+\int_0^{r_o} \left(1+t P_o \beta^{1-\eta} \nu^{\eta \alpha_\mathrm{M}}
 r^{-\alpha_\mathrm{M}}\right)^{-1} f_{\left|{X_{i,{\mathrm{M}}}}\right|} \left(\nu\right) d\nu,
\end{align}
where $r_o=\left( {\frac{{{P_{{\rm{max}}}^\mathrm{C}}}}{{{P_o}}}} \right)^{1/({\alpha_\mathrm{M}}\eta) }{\beta^{1/{\alpha_\mathrm{M}}}}$ represents the distance such that the path loss compensation reaches the maximum value under power constraint, and $f_{\left|{X_{i,{\mathrm{M}}}}\right|} \left(\nu\right)$ is the PDF of the nearest distance between the interfering CUE $i$ and its serving MBS. Substituting \eqref{Upsilon_1} and \eqref{E_I_M_further} into \eqref{E_Z_1}, after some manipulations, we obtain \eqref{Xi_1}.
Then, we have
\begin{align}\label{E_I_M}
&\Xi_2\left(t\right)=\mathbb{E}\left\{e^{ - t \sum\nolimits_{j \in {\Phi_\mathrm{D}}} {{P_{j,{\mathrm{D}}}}
{h_{j,\mathrm{M}}}L\left( {\left|X_{j,\mathrm{M}}\right|}\right)}}\right\}\nonumber\\
&=\exp\Big\{-2 \pi \lambda_\mathrm{D} \int_{0}^\infty \Big(1-\mathbb{E}
\Big\{e^{-t {{P_{j,{\mathrm{D}}}}
{h_{j,\mathrm{M}}}L\left( {\left|X_{j,\mathrm{M}}\right|}\right)}}  \Big\}\Big)r dr
\Big\}.
\end{align}
After some manipulations, the above can be derived as~\cite{Haenggi2009}
\begin{align}\label{E_I_M_1}
\Xi_2\left(t\right)=&\exp\Big(-\pi \lambda_\mathrm{D} \beta^{\frac{2}{\alpha_\mathrm{M}}} \mathbb{E}\left\{({P_{j,{\mathrm{D}}}})^{\frac{2}{\alpha_\mathrm{M}}}\right\} \nonumber\\
& \times \Gamma(1+\frac{2}{\alpha_\mathrm{M}}) \Gamma(1-\frac{2}{\alpha_\mathrm{M}}) t^\frac{2}{\alpha_\mathrm{M}}  \Big),
\end{align}
where $\Gamma(\cdot)$ is the Gamma function~\cite{gradshteyn}. By using \textbf{Lemma 1},  $\mathbb{E}\left\{({P_{j,{\mathrm{D}}}})^{\frac{2}{\alpha_\mathrm{M}}}\right\}$ is given by
\begin{align}\label{power_D2D_exp}
&\mathbb{E}\left\{({P_{j,{\mathrm{D}}}})^{\frac{2}{\alpha_\mathrm{M}}}\right\}=\int_0^\infty x^{\frac{2}{\alpha_\mathrm{M}}} f_{P_{{\mathrm{D}}}} \left(x\right)  dx \nonumber\\
&=\frac{{(P_\mathrm{max}^\mathrm{D})^{\frac{2}{{\alpha _{\rm{M}}}}}}}{\varpi_0}(1-e^{-\varpi_0}-\varpi_0 e^{-\varpi_0})+ \left(P_\mathrm{max}^\mathrm{D}\right)^{\frac{2}{\alpha_\mathrm{M}}} {\bar{\Delta}\left(P_\mathrm{max}^\mathrm{D}\right)},
\end{align}
where $f_{P_{{\mathrm{D}}}} \left(x\right)$ is given by \eqref{PDF_D2D}, $\varpi_0=\pi \lambda_\mathrm{M}{\left( \frac{\beta P_\mathrm{max}^\mathrm{D}}{I_{{\rm{th}}}} \right)^{2/{\alpha _{\rm{M}}}}}$. Substituting \eqref{power_D2D_exp} into \eqref{E_I_M_1}, we obtain \eqref{Xi_2}.

\section*{Appendix C: A Proof of Scaling Property 1}\label{App:C}
 Based on \eqref{theo_1}, the SE for a CUE can be tightly lower bounded as~\cite{Lifeng_massiveMIMO}
\begin{align}\label{cue_capacity_LB_p}
\overline{R}^\mathrm{L}_\mathrm{C}= {\log _2}\left( {1 + {X_1 e^{ {Y_2}}}} \right),
\end{align}
where $X_1=e^{Y_1}$, and
\begin{equation}\label{X_1_X_2}
\left\{ \begin{array}{l}
{Y_1} = {\rm{E}}\left\{ {\ln \left( {{P_{o,\mathrm{C}}}{h_{o,\mathrm{M}}}\beta {{\left| {{X_{o,\mathrm{M}}}} \right|}^{ - {\alpha_\mathrm{M}}}}} \right)} \right\}\\
{Y_2} = {\rm E}\left\{ {\ln \left( {\frac{1}{{{I_\mathrm{M}} + {I_\mathrm{D}} + {\sigma ^2}}}} \right)} \right\}
\end{array} \right.
\end{equation}

We first calculate $Y_1$ as
\begin{align}
{Y_1} =&{\rm{E}}\left\{ {\ln \left( {{P_{o,\mathrm{C}}}} \right)} \right\} - {\alpha_\mathrm{M}}{\rm E}\left\{ {\ln \left( {\left| {{X_{o,\mathrm{M}}}} \right|} \right)} \right\} \nonumber\\
&\quad+ {\rm{E}}\left\{ {\ln \left( {{h_{o,\mathrm{M}}}} \right)} \right\} + \ln \left( \beta  \right).
\end{align}

Based on the uplink power control given in \eqref{power_control_MBS}, we obtain ${\rm{E}}\left\{ {\ln \left( {{P_{o,\mathrm{C}}}} \right)} \right\}$ as
\begin{align}
&  {\rm{E}}\left\{ {\ln \left( {{P_{o,\mathrm{C}}}} \right)} \right\} = \int_0^\infty  {{{\rm E}_{\left| {{X_{o,\mathrm{M}}}} \right| = x}}\left\{ {\ln \left( {{P_{o,\mathrm{C}}}} \right)} \right\}} {f_{\left| {{X_{o,\mathrm{M}}}} \right|}}\left( x \right)dx  \nonumber \\
& = \int_0^{{r_o}} {\left( {\ln \left( {{P_o}{\beta ^{ - \eta }}} \right) + \eta {\alpha_\mathrm{M}}\ln \left( x \right)} \right){f_{\left| {{X_{o,\mathrm{M}}}} \right|}}\left( x \right)dx}  \nonumber \\
&\qquad\qquad+ \int_{{r_o}}^\infty  {\ln \left( {{P^\mathrm{C}_{\max }}} \right){f_{\left| {{X_{o,\mathrm{M}}}} \right|}}\left( x \right)dx} \nonumber \nonumber\\
&  = \ln \left( {{P_o}{\beta ^{ - \eta }}} \right)\left( {1 - \exp \left( { - \pi {\lambda_\mathrm{M}} r_o^2} \right) } \right) + \ln \left( {{P^\mathrm{C}_{\max }}} \right)\exp \left( { - \pi {\lambda_\mathrm{M}} r_o^2} \right) \nonumber \\
& + \frac{\eta {\alpha_\mathrm{M}} }{2}\left(\ell +\Gamma\left(0,r_o^2 \pi {\lambda_\mathrm{M}}\right)+2e^{-r_o^2 \pi {\lambda_\mathrm{M}}}\ln(r_o)+\ln(\pi {\lambda_\mathrm{M}})\right),
\end{align}
where $\ell \approx 0.5772$.
%

We then derive ${\rm E}\left\{ {\ln \left( {\left| {{X_{o,\mathrm{M}}}} \right|} \right)} \right\}$ as
\begin{align}
 {\rm E}\left\{ {\ln \left( {\left| {{X_{o,\mathrm{M}}}} \right|} \right)} \right\} & = \int_0^\infty  {\ln \left( x \right)} {f_{\left| {{X_{o,\mathrm{M}}}} \right|}}\left( x \right)dx \nonumber \\
&=\int_0^\infty  {\ln \left( x \right)}  2 \pi \lambda_\mathrm{M} x \exp\left(-\pi \lambda_\mathrm{M} x^2\right)    dx \nonumber \\
&  = \frac{1}{2}\psi \left( 1 \right) - \frac{1}{2}\ln \left( {\pi {\lambda _M}} \right).
\end{align}

Considering that $h_{o,{\mathrm{M}}}\sim \Gamma\left(N-S+1,1\right)$, we have ${\rm{E}}\left\{ {\ln \left( {{h_{o,\mathrm{M}}}} \right)} \right\}=\psi \left( {N - S + 1} \right)$. Thus, we can obtain $X_1=e^{Y_1}$  given in \eqref{X_1_theo}.

By using Jensen's inequality, we can derive the lower bound on the $Y_2$ as
\begin{align}\label{Jensen_inequa}
{Y_2} > {{\overline Y}_2} = \ln \left( {\frac{1}{{{\rm E}\left\{ {{I_\mathrm{M}}} \right\} + {\rm E}\left\{ {{I_\mathrm{D}}} \right\} + {\sigma ^2}}}} \right).
\end{align}
We first have
\begin{align}\label{Exp_I_M}
{\rm E}\left\{ {{I_\mathrm{M}}} \right\} =\int_0^\infty  {{{\rm{E}}_{\left| {{X_{o,\mathrm{M}}}} \right| = x}}\left\{ {{I_\mathrm{M}}} \right\}} {f_{\left| {{X_{o,\mathrm{M}}}} \right|}}\left( x \right)dx,
\end{align}
where ${{\rm{E}}_{\left| {{X_{o,\mathrm{M}}}} \right| = x}}\left\{ {{I_\mathrm{M}}} \right\}$ is given by
\begin{align}
& {{\rm{E}}_{\left| {{X_{o,\mathrm{M}}}} \right| = x}}\left\{ {{I_\mathrm{M}}} \right\} = {\rm E}\left\{ {\sum\nolimits_{i \in {\Phi _{u,\mathrm{M}\backslash B\left( o \right)}}} {{P_{i,\mathrm{C}}}{h_{i,\mathrm{M}}}\beta {r^{ - {\alpha_\mathrm{M}}}}} } \right\} \nonumber \\
& \mathop  = \limits^{\left(a\right)}\beta 2\pi S{\lambda_\mathrm{M}} \mathbb{ E}\left\{{P_{i,\mathrm{C}}}\right\}  \frac{x^{2- {\alpha_\mathrm{M}}}}{\alpha_\mathrm{M}-2},
\end{align}
in which $\left(a\right)$ results from Campbell's theorem, and the average transmit power of the CUE is calculated as
\begin{align}\label{average_transmit_power_CUE_112}
&\mathbb{ E}\left\{{P_{i,\mathrm{C}}}\right\}= \int_0^\infty  {{{\rm E}_{\left| {{X_{i,\mathrm{M}}}} \right| = x}}\left\{ {{{P_{i,C}}}} \right\}} {f_{\left| {{X_{i,\mathrm{M}}}} \right|}}\left( x \right)dx \nonumber \\
& = \int_0^{{r_o}} {\left(P_o \beta^{-\eta}{ x^{\eta \alpha_\mathrm{M}}} \right){f_{\left| {{X_{i,\mathrm{M}}}} \right|}}\left( x \right)dx}  + \int_{{r_o}}^\infty  {{{P^C_{\max }}}{f_{\left| {{X_{i,\mathrm{M}}}} \right|}}\left( x \right)dx} \nonumber \\
& = P_o \beta^{-\eta} {\left( {\pi {\lambda_\mathrm{M}}} \right)^{ - \frac{{\eta {\alpha_\mathrm{M}}}}{2}}}\left( {\Gamma \left( {1 + \frac{{\eta {\alpha_\mathrm{M}}}}{2}} \right) - \Gamma \left( {1 + \frac{{\eta {\alpha _\mathrm{M}}}}{2},\pi {\lambda_\mathrm{M}}\sqrt {{r_o}} } \right)} \right) \nonumber\\
&\qquad\qquad+ {{P^C_{\max }}} \exp \left( { - \pi {\lambda_\mathrm{M}} r_o^2} \right).
\end{align}
Likewise, ${ \mathbb{E}}\left\{ {{I_\mathrm{D}}} \right\}$ is derived as
\begin{align}\label{Exp_I_D}
{ \mathbb{E}}\left\{ {{I_\mathrm{D}}} \right\} &=2\pi {\lambda_\mathrm{D}}\beta {\mathbb{ E}}\left\{ {{P_{j,\mathrm{D}}}} \right\} \int_{0}^\infty \left(\max\left(D_o,r\right)\right)^{ - {\alpha_\mathrm{M}}} r dr \nonumber\\
&=2\pi {\lambda_\mathrm{D}}\beta {\mathbb{ E}}\left\{ {{P_{j,\mathrm{D}}}} \right\} \left(\frac{D_o^{2 - {\alpha_\mathrm{M}}}}{2} +\frac{D_o^{2 - {\alpha_\mathrm{M}}}}{{\alpha_\mathrm{M}}-2}\right)
\end{align}
where $D_o$ is the minimum distance between a D2D transmitter and the typical serving MBS in practice, and ${\rm E}\left\{ {{P_{j,\mathrm{D}}}} \right\}$ is given by
\begin{align}
\label{meanD2DTP}
{\rm E}\left\{ {{P_{j,\mathrm{D}}}} \right\} & = \int_0^\infty \bar{F}_{P_{j,\mathrm{D}}}\left(x\right) dx,
\end{align}
where $\bar{F}_{P_{j,\mathrm{D}}}\left(x\right)$ is the complementary cumulative distribution function. Based on \eqref{PDF_D2D}, we have
\begin{align}\label{CCDF_F_j_D}
&\bar{F}_{P_{j,\mathrm{D}}}\left(x\right)=\Pr\left(P_{j,\mathrm{D}}>x\right) \nonumber\\
&=\Pr\left( \min\left\{\frac{I_\mathrm{th}}{L\left(\left|X_{\mathrm{D}, \mathrm{M}}\right|\right)},P_\mathrm{max}^\mathrm{D}\right\} > x \right) \nonumber\\
&=\left\{ \begin{array}{l}
0,\quad x \geq P_\mathrm{max}^\mathrm{D} \quad \\
\widetilde{\Delta}\left(x\right),x < P_\mathrm{max}^\mathrm{D}
\end{array} \right.,
\end{align}
where $\widetilde{\Delta}\left(x\right)$ is
\begin{align}\label{Delta_line}
&\widetilde{\Delta}\left(x\right)=\Pr\left(\frac{I_\mathrm{th}}{L\left(\left|X_{\mathrm{D}, \mathrm{M}}\right|\right)}> x \right) \nonumber\\
&={\rm{\mathbf{1}}}\left( x < \frac{D_o^{\alpha _{\rm{M}}}I_\mathrm{th}}{\beta}  \right)  \int_0^{D_o} f_{\left| {X_{{\rm{D}},{\rm{M}}}} \right|}\left(r\right) dr+ \int_{\varpi_1}^\infty f_{\left| {X_{{\rm{D}},{\rm{M}}}} \right|}\left(r\right) dr \nonumber\\
&={\rm{\mathbf{1}}}\left( x < \frac{D_o^{\alpha _{\rm{M}}}I_\mathrm{th}}{\beta}  \right) \left(1-\exp\left(-\pi \lambda_\mathrm{M} D_o^2\right) \right)\nonumber\\
&~~+\exp\left(-\pi \lambda_\mathrm{M} (\varpi_1(x))^2\right)
\end{align}
where $\varpi_1(x)=\max\left\{D_o,{{\left( {\frac{{\beta x}}{I_\mathrm{th}}} \right)}^{1/{\alpha _{\rm{M}}}}}\right\}$. By substituting \eqref{CCDF_F_j_D} into \eqref{meanD2DTP}, ${\rm E}\left\{ {{P_{j,\mathrm{D}}}} \right\}$ is derived as
\begin{align}\label{Exp_P_j_D}
{\rm E}\left\{ {{P_{j,\mathrm{D}}}} \right\}=\int_0^{P_\mathrm{max}^\mathrm{D}} \widetilde{\Delta}\left(x\right)  dx.
\end{align}
Substituting \eqref{Exp_I_M} and \eqref{Exp_I_D} into the right-hand-side of \eqref{Jensen_inequa}, we obtain ${{\overline Y}_2}$. According to \eqref{cue_capacity_LB_p}, the expected $\overline{R}_\mathrm{C}^\mathrm{th}$ can be satisfied  when $\overline{R}_\mathrm{C}^\mathrm{th}\leq {\log _2}\left( {1 + {X_1 e^{ {{\overline Y}_2}}}} \right)$. Therefore, we have ${{\overline  Y}_2} \geq \ln \left( {\frac{{{2^{\bar R_{\rm{C}}^{{\rm{th}}}}} - 1}}{{{X_1}}}} \right)$, after some manipulations, we obtain the desired result given in \eqref{corollary_1}.

\section*{Appendix D: A proof of Theorem 2}\label{App:D}

Based on \eqref{SINR_Small}, $\overline{R}_\mathrm{D}$ is given by
\begin{align}\label{theo_2_1}
&\overline{R}_\mathrm{D}=\mathbb{E}\left\{\log_2\left(1+\mathrm{SINR}_\mathrm{D}\right)\right\} \nonumber\\
&\mathop  = \limits^{(a)}1/\ln(2) \int_0^\infty  \frac{1}{t}\Big(1 -\underbrace{\mathbb{E}\left\{ {{e^{ - t{Z_2}}}} \right\}}_{{\Xi_3}\left(t\right)}\Big) \underbrace{\mathbb{E}\left\{ {{e^{ - t{\left(J_\mathrm{M}+J_\mathrm{D}\right)}}}} \right\}}_{\Xi_4\left(t\right)}{e^{ - t{\sigma ^2}}}dt,
\end{align}
where $Z_2={{P_{o,\mathrm{D}}}{g_{o,\mathrm{D}}}L\left( d_o \right)}$, and step (a) is obtained by following the similar approach in \eqref{theo_2}. Using \textbf{Lemma } 1, we first derive ${\Xi_3}\left(t\right)$ as
\begin{align}\label{Xi_2_theo}
&{\Xi_3}\left(t\right)=\mathbb{E}_{P_{o,\mathrm{D}}}\left\{\mathbb{E}_{g_{o,\mathrm{D}}}\left\{ {{e^{ - t{Z_2}}}} \right\}\right\}=\mathbb{E}_{P_{o,\mathrm{D}}}\left\{\frac{1}{{1 + t{P_{o,\mathrm{D}}} \beta d_o^{-\alpha_\mathrm{D}}}}\right\} \nonumber\\
&=\int_0^\infty \left({{1 + t x \beta d_o^{-\alpha_\mathrm{D}}}}\right)^{-1} f_{{P_{{\mathrm{D}}}}} \left(x\right) dx.
\end{align}
Substituting \eqref{PDF_D2D} into \eqref{Xi_2_theo}, we obtain \eqref{Xi_3}.

Considering that the cellular interference $J_\mathrm{M}$ and D2D interference $J_\mathrm{D}$ are independent, $\Xi_4\left(t\right)$ is calculated as
\begin{align}\label{Xi_4_step1}
\Xi_4\left(t\right)=\mathbb{E}\left\{ e^{ - t J_\mathrm{M}} \right\}\mathbb{E}\left\{ e^{ - t J_\mathrm{D}} \right\}.
\end{align}
Similar to \eqref{E_I_M_1}, $\mathbb{E}\left\{ e^{ - t J_\mathrm{M}} \right\}$ is derived as
\begin{align}\label{J_M_1}
\hspace{-0.4 cm}\mathbb{E}\left\{ e^{ - t J_\mathrm{M}} \right\}&=\exp\Big(-\pi S \lambda_\mathrm{M} \beta^{\frac{2}{\alpha_\mathrm{D}}} \Omega_1 \Gamma(1+\frac{2}{\alpha_\mathrm{D}}) \Gamma(1-\frac{2}{\alpha_\mathrm{D}}) t^\frac{2}{\alpha_\mathrm{D}}  \Big),
\end{align}
where $\Omega_1=\mathbb{E}\left\{({P_{i,{\mathrm{C}}}})^{\frac{2}{\alpha_\mathrm{D}}}\right\}$ is given by
\begin{align}\label{E_P_i_C}
\Omega_1&=\int_0^{r_o} \big(P_o \beta^{-\eta} \nu^{\eta \alpha_\mathrm{M} }  \big)^{\frac{2}{\alpha_\mathrm{D}}} f_{\left|{X_{i,{\mathrm{M}}}}\right|} \left(\nu\right) d\nu \nonumber\\
&\qquad+\big(P_\mathrm{max}^{\mathrm{C}}\big)^{\frac{2}{\alpha_\mathrm{D}}}\int_{r_o}^\infty
f_{\left|{X_{i,{\mathrm{M}}}}\right|} \left(\nu\right) d\nu.
\end{align}
After some manipulations, we derive $\Omega_1$ as \eqref{Omega_123}. Likewise, $\mathbb{E}\left\{ e^{ - t J_\mathrm{D}} \right\}$ is derived as
\begin{align}\label{E_J_gamma}
\mathbb{E}\left\{ e^{ - t J_\mathrm{D}} \right\}= &\exp\Big(-\pi \lambda_\mathrm{D} \beta^{\frac{2}{\alpha_\mathrm{D}}} \Omega_2
 \Gamma(1+\frac{2}{\alpha_\mathrm{D}}) \Gamma(1-\frac{2}{\alpha_\mathrm{D}}) t^\frac{2}{\alpha_\mathrm{D}}  \Big),
\end{align}
where  $\Omega_2=\mathbb{E}\left\{({P_{j,{\mathrm{D}}}})^{\frac{2}{\alpha_\mathrm{D}}}\right\}$, using \textbf{Lemma} 1, we have
\begin{align}\label{E_P_j_D}
&\Omega_2=\int_0^{P_\mathrm{max}^\mathrm{D}}  x^{\frac{2}{\alpha_\mathrm{D}}} f_{P_{{\mathrm{D}}}} \left(x\right)  dx +
\left(P_\mathrm{max}^\mathrm{D}\right)^{\frac{2}{\alpha_\mathrm{D}}} {\bar{\Delta}\left(P_\mathrm{max}^\mathrm{D}\right)}.
\end{align}
After some manipulations, we derive $\Omega_2$ as \eqref{Omega_223} at the following page. Then,
we attain \eqref{Xi_4} by substituting \eqref{J_M_1} and \eqref{E_J_gamma} into \eqref{Xi_4_step1}.

\section*{Appendix E:A proof of Scaling Property 2}\label{App:E}

Similar to \eqref{theo_2_1}, the SE for a D2D link can be tightly lower bounded as
\begin{equation}
\label{due_capacity_LB_p}
{\rm{\bar R}}_{^{\rm{D}}}^{\rm{L}} = {\log _2}\left( {1 + {X_4}{e^{{Y_4}}}} \right)
\end{equation}
where $ {X_4} = {e^{{Y_3}}}$, and
\[\left\{ \begin{array}{l}
{Y_3} = {\rm E}\left\{ {\ln \left( {{P_{o,{\rm{D}}}}{g_{o,{\rm{D}}}}\beta {d_o}^{ - {\alpha _{\rm{D}}}}} \right)} \right\}\\
{Y_4} = {\rm E}\left\{ {\ln \left( {\frac{1}{{{J_{\rm{M}}} + {J_{\rm{D}}} + {\sigma ^2}}}} \right)} \right\}
\end{array} \right.\]

We first calculate $ Y_3 $ as
\begin{align}
{Y_3} = {\rm E}\left\{ {\ln \left( {{P_{o,{\rm{D}}}}} \right)} \right\} + {\rm E}\left\{ {\ln \left( {{g_{o,{\rm{D}}}}} \right)} \right\} - {\alpha _{\rm{D}}}\ln \left( {{d_o}} \right) + \ln \left( \beta  \right)
\end{align}

Based on the D2D power control in \eqref{Power_control_D2D} and \textbf{Lemma} \ref{Lemma1}, we obtain $  {\rm E}\left\{ {\ln \left( {{P_{o,{\rm{D}}}}} \right)} \right\}  $ as
\begin{align}
&{\rm E}\left\{ {\ln \left( {{P_{o,{\rm{D}}}}} \right)} \right\} = \int_0^\infty  {\ln \left( x \right){f_{{P_{\rm{D}}}}}\left( x \right)dx} \nonumber \\
& = \int_0^{P_{\max }^{\rm{D}}} {\ln \left( x \right){f_{{P_{\rm{D}}}}}\left( x \right)dx + \ln \left( {P_{\max }^{\rm{D}}} \right)} \bar \Delta \left( {P_{\max }^{\rm{D}}} \right)
\end{align}
Considering that ${{g_{o,{\rm{D}}}}} \backsim \exp (1)$, we have ${\rm E}\left\{ {\ln \left( {{g_{o,{\rm{D}}}}} \right)} \right\} =\int_0^\infty \ln(x)e^{-x}dx=\ell \approx 0.5772$. Thus, we can obtain ${X_4} = {e^{{Y_3}}} $ given in \eqref{X_4_theo}.

By using Jensen's inequality, we can derive the lower bound on the $ Y_4 $ as
\begin{align}\label{bar_Y_4}
{Y_4} > {{\overline  Y}_4} = \ln \left( {\frac{1}{{{\rm E}\left\{ {{J_{\rm{M}}}} \right\} + {\rm E}\left\{ {{J_{\rm{D}}}} \right\} + {\sigma ^2}}}} \right).
\end{align}
We first derive ${\rm E}\left\{ {{J_{\rm{M}}}} \right\} $ as
\begin{align}\label{APP_D_E_j_M}
& {\rm E}\left\{ {{J_{\rm{M}}}} \right\} = {\rm E}\left\{ {\sum\nolimits_{i \in {\Phi _{u,{\rm{M}}}}} {{P_{i,C}}{g_{i,{\rm{D}}}}L\left( {\left| {{X_{i,{\rm{D}}}}} \right|} \right)} } \right\} \nonumber \\
& = 2\pi S{\lambda _M}{\rm E}\left\{ {{P_{i,{\rm{C}}}}} \right\}\beta \int_0^\infty  {{\left(\max\left(D_1,r\right)\right)^{ - {\alpha _{\rm{D}}}}}r dr}\nonumber\\
&=2\pi S{\lambda _M}{\rm E}\left\{ {{P_{i,{\rm{C}}}}} \right\}\beta \left(\frac{D_1^{2 - {\alpha_\mathrm{D}}}}{2} +\frac{D_1^{2 - {\alpha_\mathrm{M}}}}{{\alpha_\mathrm{D}}-2}\right),
\end{align}
where $ {\rm E}\left\{ {{P_{i,{\rm{C}}}}} \right\} $ is given by \eqref{average_transmit_power_CUE_112}, and $D_1$ is the reference distance to avoid singularity.

Similar to \eqref{Exp_I_D}, $ {\rm E}\left\{ {{J_{\rm{D}}}} \right\} $  is calculated as
\begin{align}\label{D2D_intefere_APP_D}
{\rm E}\left\{ {{J_{\rm{D}}}} \right\}=2\pi {\lambda_\mathrm{D}}\beta {\mathbb{ E}}\left\{ {{P_{j,\mathrm{D}}}} \right\} \left(\frac{D_2^{2 - {\alpha_\mathrm{D}}}}{2} +\frac{D_2^{2 - {\alpha_\mathrm{D}}}}{{\alpha_\mathrm{D}}-2}\right),
\end{align}
where ${\mathbb{ E}}\left\{ {{P_{j,\mathrm{D}}}} \right\}$ is given by \eqref{Exp_P_j_D}, and $D_2$ is the reference distance.

Substituting \eqref{APP_D_E_j_M} and \eqref{D2D_intefere_APP_D} into the right-hand-side of \eqref{bar_Y_4},
we obtain ${{\overline Y}_4}$. Based on \eqref{due_capacity_LB_p} and \eqref{bar_Y_4}, $\overline{R}_\mathrm{D}^\mathrm{th}\leq {\log _2}\left( {1 + {X_4 e^{ {{\overline Y}_4}}}} \right) \Rightarrow {{\overline  Y}_4} \geq \ln \left( {\frac{{{2^{\bar
R_{\rm{D}}^{{\rm{th}}}}} - 1}}{{{X_4}}}} \right)$, after some manipulations, we obtain  \eqref{corollary_2} and complete the proof.

\bibliographystyle{IEEEtran}

\end{document}